\documentclass[11pt]{article}

\usepackage{amsmath}
\usepackage{amssymb}
\usepackage{amsthm}
\usepackage{mathtools}
\usepackage[colorlinks=true, linkcolor=red, citecolor=blue, urlcolor=blue]{hyperref}
\usepackage{bm}
\usepackage{algorithm}
\usepackage[noend]{algpseudocode}
\usepackage[capitalize]{cleveref}
\usepackage{mdframed}
\usepackage{aliascnt}

\providecommand{\email}[1]{\href{mailto:#1}{\nolinkurl{#1}}}

\newtheorem{theorem}{Theorem}[section]

\newaliascnt{lemma}{theorem}
\newtheorem{lemma}[lemma]{Lemma}
\aliascntresetthe{lemma}

\newaliascnt{corollary}{theorem}
\newtheorem{corollary}[corollary]{Corollary}
\aliascntresetthe{corollary}

\newaliascnt{remark}{theorem}
\newtheorem{remark}[remark]{Remark}
\aliascntresetthe{remark}

\newaliascnt{definition}{theorem}
\newtheorem{definition}[definition]{Definition}
\aliascntresetthe{definition}

\crefname{theorem}{Theorem}{Theorems}
\crefname{lemma}{Lemma}{Lemmas}
\crefname{corollary}{Corollary}{Corollaries}
\crefname{remark}{Remark}{Remarks}
\crefname{definition}{Definition}{Definitions}

\newcommand{\mean}{\bm{\mu}}
\newcommand{\eps}{\varepsilon}
\newcommand{\cY}{\mathcal{Y}}
\newcommand{\tilded}{L}

\usepackage{fullpage,palatino}

\title{Distributed Gaussian Mean Testing under Communication Constraints: messages, samples, and coins}

\author{Clément L. Canonne\thanks{University of Sydney. Email: \email{clement.canonne@sydney.edu.au}.} \and Nimitt\thanks{IIT Gandhinagar. Email: \email{nimittnim@gmail.com}.}}

\begin{document}

\maketitle

\begin{abstract}
 We revisit the problem of Gaussian mean testing in a distributed, communication constrained setting, where each of $n$ users independently observes samples from an unknown $d$-dimensional spherical Gaussian distribution $\mathcal{G}(\mean,\mathbb{I}_d)$, and can communicate up to $\ell$ bits to a central referee. The referee's goal is then to distinguish between cases (i)~$\|\mean\|_2 = 0$ versus (ii)~$\|\mean\|_2\ge \eps$. This problem has been considered in the private- and public-coin settings, when each user holds exactly \emph{one} sample, or more generally when each holds exactly $m$ samples. In this work, we significantly generalize the question in three directions: when the users only share a small number $s$ of random bits, when each user holds a different number of samples $m_k$, and when each user can send a different number of bits $\ell_k$ to the referee.

\end{abstract}

\section{Introduction}
You have been put in charge of a new program, to monitor underwater volcanic activity in a large sector of the ocean. To maximize coverage, you have identified a set of $n$ locations where to deploy buoys equipped with high-tech measuring devices, strategically chosen to maximize coverage. Each of these sensors will measure the volcanic activity at regular intervals, and send hourly the result to the central station for processing and analysis, to detect any risk. But choosing the location of these buoys was only the first step: the ocean is a harsh and big place, and as a result these sensors are not only quite distant from the station, they also have strong limitations on how many good-quality measurements they can make and how much data they can accurately transmit before running out of energy for the day. The measurements they make are subject to significant background noise; and even broadcasting data \emph{to} the network of buoys to synchronize them is complex, given how remote they all are. So you are left with a daunting, highly constrained task: 
\begin{mdframed}\itshape
    How to design a good detection protocol under heterogeneous constraints, where each sensor can make a different number of measurements, send a different amount of data back to the center, and they all only share a limited amount of coordination?
\end{mdframed}

\paragraph{Problem Statement.} We formulate the problem as a composite hypothesis testing task. Let $\mathcal{G}(\mean,\mathbb{I}_d)$ denote a $d$-dimensional Gaussian distribution with mean vector $\mean \in \mathbb{R}^d$ and covariance $\mathbb{I}_d$. Define the following hypothesis classes for a pre-specified threshold $\eps \in (0,1]$:
\[
    \mathcal{H}_0:= \{\mathcal{G}(\bm{0},\mathbb{I}_d)\} \quad \mathcal{H}_1:= \{\mathcal{G}(\mean,\mathbb{I}_d): \|\mean\|_2\ge \eps\}
\]
In a distributed setting, $n$ users observe i.i.d.\ samples from an unknown distribution $G \in\mathcal{H}_0 \cup \mathcal{H}_1$ and transmit $\ell$-bit messages to a central referee. Specifically, user $k \in[n]$ observes $m_k \ge 1$ independent samples from $G$:
\[
X^{(k)}= \left(\bm{X}^{(k,1)},\bm{X}^{(k,2)},\cdots, \bm{X}^{(k,m_k)}\right), \quad \bm{X}^{(k,i)} \sim G.
\]
and applies a (possibly randomized) encoding function $W^{(k)}\colon \mathbb{R}^{d\cdot m_k} \to \cY_k := \{0,1\}^{\ell_k}$
to it, before sending the resulting message $\bm{Y}^{(k)} = W^{(k)}\left(X^{(k)}\right)$ to the referee. %
The referee then applies a test $T\colon \cY_1\times\cdots\times\cY_n \to \{0,1\}$ to the vector of received messages
$Y^{n} =\left(\bm{Y}^{(1)},\bm{Y}^{(2)},\cdots,\bm{Y}^{(n)}\right)$, and accepts $\mathcal{H}_0$ if, and only if, $T(Y^{n} ) = 0$.

We allow the users to share $s \geq 0$ bits of \emph{public randomness}, which are used to jointly sample the tuple encoding functions $W^n = (W^{(1)},W^{(2)},\cdots,W^{(n)})$.\footnote{That is, the encoding functions (or ``channels'') are randomized using both $s\geq 0$ bits of public randomness, and an arbitrary number of bits of private randomness (independent across users).} We define the \emph{two-sided error} for protocol $\Pi := (W^n,T)$ as
\[
    \delta(\Pi) = \max \left(\sup_{G \in \mathcal{H}_0} \Pr\left[T(Y^n)=1\right], \sup_{G\in\mathcal{H}_1}\Pr \left[T(Y^n) =0\right] \right)  
\]
\begin{figure}[H]
    \centering
    \includegraphics[width=0.70\textwidth]{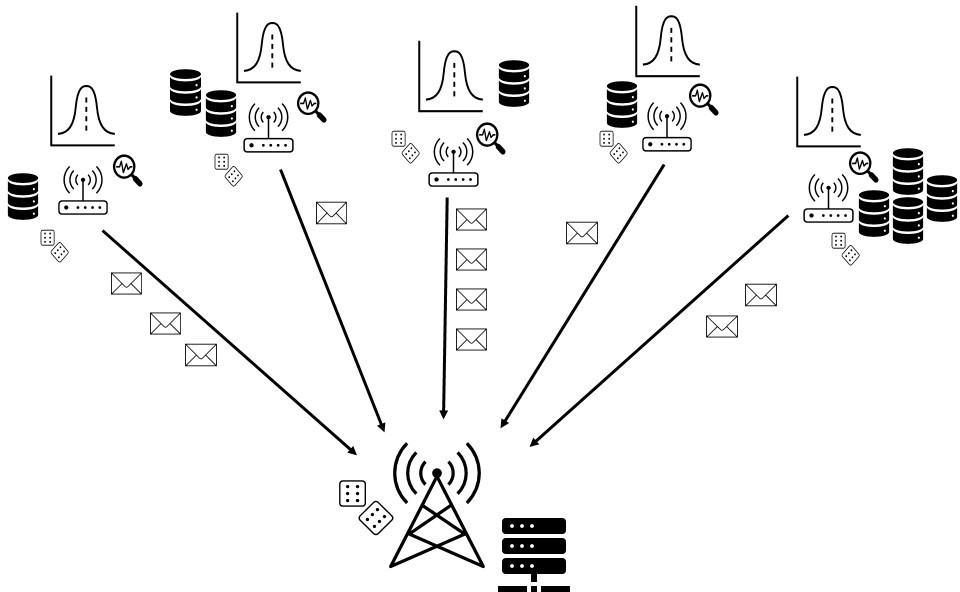}
    \caption{A depiction of the setting considered in this work: several distributed machines (``users'') are measuring the same signal (vector $\mean\in\mathbb{R}^d$) subject to white noise (spherical Gaussian). They all share a short random seed, and have different capabilities (amount of data collected, amount of communication they can send): by sending a message to a central server (``referee''), they must enable it to test whether the underlying signal measure exists ($\|\mean\|_2> \eps$) or is pure noise ($\mean = \bm{0}$).}
    \label{fig:setting}
\end{figure}
The goal is to find required amount of \emph{resources} (amount of shared randomness $s$, numbers of samples $(m_k)_k$, individual communication budgets $(\ell_k)_k$) such that there exists a protocol $\Pi$ with two-sided error bounded by a target error probability $\delta$ (unless specified otherwise, fixed as a small constant  $1/10$). For simplicity, we decompose the problem into following three subproblems: 

\begin{enumerate}
    \item \textbf{Limited public coin}: Each user observes only one sample, and sends $\ell$ bits ($m_k=1$, $\ell_k=\ell$ for all $k$); the users are allowed to share $s$ bits of public randomness. What is the \emph{sample complexity}, that is, the minimum $n=n(d,s,\ell,\eps)$ such that $\inf_{\Pi} \delta(\Pi) \le 1/10$?
    \item \textbf{Heterogeneous samples}: The users share $s = \Omega(\log(d/\ell))$ bits of public randomness, and each sends $\ell$ bits ($\ell_k=\ell$ for all $k$). What are sufficient conditions on $\bm{m}=(m_1,\dots,m_n)$ such that there exists a protocol $\Pi$ with $\delta(\Pi) \le 1/10$?
    \item \textbf{Heterogeneous communication}: The users share $s = \Omega(\log(d/\ell))$ bits of public randomness, and each observes one sample ($m_k=1$ for all $k$). What are sufficient conditions on $\bm{\ell}=(\ell_1,\dots,\ell_n)$ such that there exists a protocol $\Pi$ with $\delta(\Pi) \le 1/10$?
\end{enumerate}

\paragraph{Our results.} We now summarize our results for these three subproblems. 

\begin{theorem}[Limited public-coin]\label{sdgmt} For $\ell \in [d]$ and $s\ge 0$, there exists an $s$-public coin protocol for distributed Gaussian mean testing under $\ell$-bit communication constraint with 
    \[O\left(\frac{d}{\ell\eps^2}\cdot\sqrt{\max\left(\frac{d}{2^{ \Theta(s) }} , \ell\right)}\right)
    \]
    users, each holding one sample.
\end{theorem}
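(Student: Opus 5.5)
Let $K := \min(2^{\Theta(s)}, d/\ell)$ (we may assume $d/2^{\Theta(s)}\ge \ell$, since otherwise it suffices to use only $\log(d/\ell)$ of the $s$ shared bits). The protocol uses the $s$ public bits to pick a uniformly random index $j\in[K]$, which selects one of $K$ fixed orthogonal transforms (rotations) $U_1,\dots,U_K$ of $\mathbb{R}^d$. We then partition the $n$ users into $d/\ell$ groups of size $n' = n\ell/d$. The users of group $g$ are responsible for a block $B_g$ of $\ell$ coordinates of the rotated sample $U_j\bm{X}$. Each user sends the signs of its $\ell$ assigned rotated coordinates (one bit each). Since $U_j$ is orthogonal, $U_j\bm{X}\sim\mathcal{G}(U_j\mean,\mathbb{I}_d)$, so each sent bit is Bernoulli with mean $\Phi((U_j\mean)_i)$, which is $\approx 1/2 + (U_j\mean)_i/\sqrt{2\pi}$ for small means. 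The referee forms the standard $\chi^2$/U-statistic over the empirical bias vectors: for each coordinate $i$ it computes an unbiased estimator of $(\Pr[\text{bit}=1]-1/2)^2$ from $n'$ samples, and it thresholds the sum $Z$ of these estimators. Under $\mathcal{H}_0$ we have $\mathbb{E}Z=0$ and $\mathrm{Var}\,Z \asymp d/n'^2$. Under $\mathcal{H}_1$, conditioned on $j$, we have $\mathbb{E}Z \gtrsim \|U_j\mean\|_2^2=\|\mean\|_2^2\ge\eps^2$, provided no coordinate of $U_j\mean$ is too large. Large coordinates only help, since $\Phi$ is monotone and the bias is at least of constant order.

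A naive Chebyshev analysis requires $\eps^2 \gg \sqrt d/n'$, i.e.\ $n \gg d^{3/2}/(\ell\eps^2)$, which is the private-coin bound ($K=1$). This is tight when the mean concentrates on few coordinates that all land in one group, because then only $n'$ users see the signal. The role of the randomness is to spread $\mean$. We would choose the family $\{U_j\}$ so that, for every fixed unit vector $v$, with probability at least $9/10$ over $j$ the energy of $U_jv$ is spread across the $d/\ell$ blocks up to a ``flatness'' parameter. This is where the $\sqrt{\max(d/K,\ell)}$ factor should come from. The key quantity is the variance of $Z$ under the alternative, specifically the cross term $\sum_i (U_j\mean)_i^2/n'$. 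Together with the null variance, it forces $\eps^2 \gtrsim \sqrt{d}/n'$ in the worst case, but we want the effective dimension replaced by $\max(d/K,\ell)$. Concretely, I would instead let the shared randomness choose which block of size $L=\max(d/K,\ell)$ is \emph{jointly} estimated: the referee only aggregates within random subspaces of dimension $L$, obtained by projecting onto $K$ orthogonal random $L$-dimensional subspaces. Then each statistic has null standard deviation $\sqrt{L}/n''$, and a random rotation captures a $1/K$ fraction of $\|\mean\|^2$ in expectation (and with constant probability, by a 2-wise independence argument). This yields $n \asymp \frac{d}{\ell\eps^2}\sqrt{L}$ after balancing $n''$ against $\ell/L$ users per coordinate.

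The main obstacle is the derandomization. We need $K=2^{\Theta(s)}$ transforms (e.g.\ built from a pairwise-independent family of sign-flips composed with a Hadamard-type mixing, or from an $\eps$-biased/4-wise independent family) such that, for every $\mean$, the random subspace captures a $\gtrsim 1/K$ fraction of $\|\mean\|_2^2$ with constant probability. This must hold simultaneously with the anti-concentration needed for the sign-bit bias to be linear in the coordinate. I would prove this with a second-moment (Paley--Zygmund) argument using only 4-wise independence of the seed, which costs $O(\log d)$ bits per ``coin'' and explains the $\Theta(s)$ in the exponent. I would then conclude with Chebyshev under both hypotheses.
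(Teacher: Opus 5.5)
\textbf{Verdict: there is a genuine gap.} Your final protocol has the same architecture as the paper's. This is the protocol after you abandon the $d/\ell$-group scheme, whose bottleneck is the null variance of aggregating all $d$ coordinates; no rotation changes that, so ``spreading $\mean$'' is not what the randomness buys. In the final version you use the seed to pick an orthogonal transform and keep $L=\max(d/2^{\Theta(s)},\ell)$ transformed coordinates. You then run the private-coin simulate-and-infer test there with $\eps'^2\asymp \eps^2L/d$, giving $n\asymp (L/\ell)\cdot\sqrt{L}/\eps'^2 = d\sqrt{L}/(\ell\eps^2)$.

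The gap is the one non-routine ingredient, which you call ``the main obstacle'' but do not supply. You need a family of $2^{O(s)}$ orthogonal transforms such that, for every $\mean$, the kept $L$ coordinates retain $\Omega(L/d)\,\|\mean\|_2^2$ with constant probability. None of your candidates achieves this at the required seed length:
\begin{itemize}
\item Four-wise independent sign flips on all $d$ coordinates need $\Theta(\log d)$ seed bits regardless of $L$ (your ``$O(\log d)$ bits''). They therefore say nothing when $s\ll\log d$, which is exactly the interpolation regime the theorem is about.
\item Picking uniformly one of $K$ fixed, mutually orthogonal $L$-dimensional subspaces fails outright: a $\mean$ lying inside one of them is captured only with probability $1/K$.
\item Pairwise independence gives the right expectation but not the constant-probability lower bound.
\end{itemize}

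\textbf{The missing idea is to make the signs constant on blocks.} Split the coordinates into $b=d/L$ consecutive blocks $\mean_1,\dots,\mean_b$ of length $L$, and flip block $i$ by a single sign $\delta_i$, with $\delta_1,\dots,\delta_b$ only $4$-wise independent. This costs $O(\log b)=O(\log(d/L))$ seed bits, which is exactly what yields $L=d/2^{\Theta(s)}$.

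For $R=H_dD$, the Kronecker structure $H_d=H_b\otimes H_L$ makes the first output block equal to $b^{-1/2}\sum_i\delta_iH_L\mean_i$. Its squared norm is $Z=b^{-1}\|\sum_i\delta_i\mean_i\|_2^2$, with $\mathbb{E}Z=\|\mean\|_2^2/b$ and, by $4$-wise independence, $\mathbb{E}Z^2\le 3\|\mean\|_2^4/b^2$. Paley--Zygmund then gives $\Pr[Z\ge\|\mean\|_2^2/(100b)]\ge 8/25$; when $L=\ell$ exceeds $d/2^{\Theta(s)}$, keep several blocks instead of one. Orthogonality of $R$ keeps the noise distributed as $\mathcal{G}(\cdot,\mathbb{I})$.

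Because this capture event has only constant probability (Paley--Zygmund cannot give the $9/10$ you posit), you also need the amplification step the paper uses. Under $\mathcal{H}_0$ the projected mean is $\bm{0}$ for every seed, so you can run $O(1)$ independent repetitions with fresh seeds and reject if any repetition rejects.

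Finally, the ``simultaneous anti-concentration'' you worry about is unnecessary: $|\Phi(x)-1/2|=\Omega(\min(|x|,1))$ together with $\eps\le 1$ already suffices.
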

Note that for $s = 0$ and $s = O(\log(d/\ell))$, this matches the \emph{private-} and \emph{public-coin}  bounds of $O(d^{3/2}/(\ell\eps^2))$ and $O(d/(\sqrt{\ell}\eps^2))$ from \cite{AcharyaCT20}, known to be tight~\cite{SzaboVZ23}.

\begin{theorem}[Heterogeneous samples] \label{msampledgmt}
    For $\ell \in [r,rd]$, $s= \Omega(\log(d/\ell))$, and $\bm{m} = (m_1,\dots, m_n) \in \mathbb{Z}_+^n$ satisfying $m_k \in [r,r\eps^{-2}]$ for all $k \in [n]$, there exists a protocol for distributed Gaussian mean testing under $\ell$-bit communication constraint with $n$ users provided that 
    \[
    \frac{\sum_{1\leq k_1 \neq k_2 \leq n} \sqrt{m_{k_1}{m_{k_2}}}}{n}  \geq C\cdot\frac{d}{\sqrt\ell\eps^2} 
    \]
    where $C,r>0$ are absolute constants.
\end{theorem}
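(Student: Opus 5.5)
Each user $k$ first compresses its $m_k$ samples into one sufficient statistic: the empirical mean $\bar{\bm X}^{(k)} = \frac1{m_k}\sum_i \bm X^{(k,i)} \sim \mathcal{G}(\mean, \mathbb{I}_d/m_k)$. Rescaling gives $\bm Z^{(k)} := \sqrt{m_k}\,\bar{\bm X}^{(k)} \sim \mathcal{G}(\sqrt{m_k}\mean,\mathbb{I}_d)$. We are then in the one-sample setting, except that user $k$ sees the signal amplified to $\sqrt{m_k}\mean$.

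The protocol follows the public-coin scheme of \cite{AcharyaCT20} used for \cref{sdgmt} with $s=\Omega(\log(d/\ell))$.
\begin{enumerate}
\item The shared randomness selects a random rotation $U$, or a pseudorandom substitute of small seed length. It also partitions coordinates into $d/\ell$ blocks of size $\ell$ and assigns each user a block $B_k$.
\item User $k$ sends the $\ell$ bits $\mathrm{sign}((U\bm Z^{(k)})_j)$ for $j\in B_k$.
\end{enumerate}
For each coordinate $j$ and $|\theta|$ small, $\mathbb{E}[\mathrm{sign}(g+\theta)] = 2\Phi(\theta)-1 \approx c\,\theta$. Hence user $k$'s bit $b_{k,j}$ has mean $\approx c\sqrt{m_k}\,(U\mean)_j$.

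The constraint $m_k\le r\eps^{-2}$ matters here. It only guarantees $\sqrt{m_k}\,\eps\le\sqrt r$, so we must ensure coordinate-wise $\sqrt{m_k}\,|(U\mean)_j|$ is small. I would use a standard reduction: it suffices to handle $\|\mean\|_2\in[\eps,2\eps]$, since a larger signal can be thinned by each user adding independent Gaussian noise to its samples. After the random rotation, $|(U\mean)_j|\lesssim \eps\sqrt{\log d/d}$ with high probability, so the linear approximation applies. It gives $|\mathbb{E} b_{k,j}| \ge c'\sqrt{m_k}\,|(U\mean)_j|$ with a matching upper bound. The lower bounds $m_k\ge r$ and $\ell\ge r$ are only used to absorb constants.

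The referee computes a U-statistic pairing users that share a block:
\[
T = \sum_{j}\ \sum_{\substack{k_1\neq k_2\\ B_{k_1}\ni j,\,B_{k_2}\ni j}} b_{k_1,j}\,b_{k_2,j}.
\]
For the mean: under $\mathcal{H}_0$ the bits are independent and unbiased, so $\mathbb{E}T=0$. Under $\mathcal{H}_1$,
\[
\mathbb{E}T \gtrsim \sum_j (U\mean)_j^2 \sum_{k_1\neq k_2\ni j}\sqrt{m_{k_1}m_{k_2}}.
\]
Since each user's block is uniformly random, averaging over the block assignment gives weight $\approx \frac{\ell}{d}\cdot\frac{\ell}{d}\cdot\frac{d}{\ell}$ per pair and coordinate. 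Summing over $j$ with $\|U\mean\|^2\ge\eps^2$ yields $\mathbb{E}T\gtrsim \frac{\ell\eps^2}{d}S$, where $S=\sum_{k_1\neq k_2}\sqrt{m_{k_1}m_{k_2}}$. Rather than averaging, I would pick a deterministic balanced assignment: users are assigned to blocks so that each block's sum of $\sqrt{m_k}$ is balanced. This costs a constant factor when no single user dominates, which is the regime $S/n$ large.

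For the variance, under $\mathcal{H}_0$ the number of colliding pairs is $\sum_j\#\{\text{pairs sharing }j\}\approx n^2\ell/d$, so $\mathrm{Var}_0 T\approx n^2\ell/d$. Under $\mathcal{H}_1$ there is an additional cross term bounded by $\mathbb{E}T\cdot\max_k\sqrt{m_k}\,\eps\cdot(\cdot)$, which is dominated thanks to $m_k\eps^2\le r$. Chebyshev then needs $(\mathbb{E}T)^2\gg \mathrm{Var} T$, i.e.
\[
\frac{\ell^2\eps^4}{d^2}S^2\gg \frac{n^2\ell}{d},
\]
which is exactly $S/n\ge C\,d/(\sqrt{\ell}\,\eps^2)$.

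The main obstacle is controlling the $\mathcal{H}_1$ variance and the randomness of the block assignment when the $m_k$ are heterogeneous. A few heavy users could concentrate $S$ in one block, and the rotation's randomness also enters the variance. I would handle this by conditioning on a good rotation, and then bounding the cross term $\sum_{k_1,k_2,k_3}\sqrt{m_{k_1}m_{k_2}}\,m_{k_3}(\cdot)$ using $m_k\le r\eps^{-2}$.
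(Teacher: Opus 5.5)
\paragraph{There is a genuine gap.} Splitting the coordinates into $d/\ell$ blocks, with a balanced assignment of users and pairing only users on the same block, is the simulate-and-infer (private-coin) scheme. It cannot reach the claimed rate. Redo the counts with $G_B$ the users on block $B$, $n_B:=|G_B|\approx n\ell/d$, and $W_B:=\sum_{k\in G_B}\sqrt{m_k}\approx\frac{\ell}{d}\sum_k\sqrt{m_k}$.
\begin{itemize}
\item A given pair shares a given coordinate with weight $(\ell/d)^2$, not $\ell/d$. Hence $\mathbb{E}T\approx\sum_B\|(U\mean)_B\|_2^2\,W_B^2\approx\frac{\ell^2}{d^2}\eps^2 S$.
\item A colliding pair shares all $\ell$ coordinates of its block. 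Hence $\operatorname{Var}_0T\approx\ell\sum_B n_B^2\approx n^2\ell^2/d$.
\end{itemize}
Chebyshev then requires $S/n\gtrsim d^{3/2}/(\ell\eps^2)$. This is worse than the target by exactly the private/public gap $\sqrt{d/\ell}$. The rotation also does nothing here: with balanced blocks, $\mathbb{E}T$ only sees $\sum_B\|(U\mean)_B\|_2^2=\|\mean\|_2^2$, whatever $U$ is. Your last step is a further warning sign. The inequality $\frac{\ell^2\eps^4}{d^2}S^2\gg\frac{n^2\ell}{d}$ is equivalent to $S/n\gg\sqrt{d/\ell}/\eps^2$, not to $S/n\gg d/(\sqrt\ell\eps^2)$. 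For $\ell=d$ and $m_k=1$ it would give $n\gg 1/\eps^2$, which beats the centralized $\sqrt d/\eps^2$ lower bound.

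\paragraph{The fix is the opposite of balancing.} Put every user on the same coordinates, and use the shared rotation to \emph{compress} the signal onto them. This is the paper's protocol. A common BRHT $R$, sampled with $O(\log(d/\ell))$ bits, satisfies $\|(R\mean)_{[1:\ell/7]}\|_2^2\ge\frac{1}{100}\cdot\frac{\ell}{7d}\|\mean\|_2^2$ with probability at least $8/25$ (\cref{lemma:compression}, via Paley--Zygmund, using only $4$-wise independence). Every user then sends the signs of these same $\ell/7$ coordinates. Now all $n(n-1)$ ordered pairs contribute on every coordinate, so $\mathbb{E}T\gtrsim S\ell\eps^2/d$ while $\operatorname{Var}_0T\lesssim n^2\ell$. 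Under $\mathcal{H}_1$ the only extra variance is $(n-2)\mathbb{E}T$, by \cref{lemma:testing_lemma}; it applies because all biases $\tilde p_i^{(k)}$ share the sign of $\tilde\mu_i$. This gives exactly $S/n\gtrsim d/(\sqrt\ell\eps^2)$. There are no blocks to balance, so your heavy-user concern disappears. Seven repetitions with fresh BRHTs, splitting both the samples and the $\ell$ bits seven ways, amplify the constant success probability.

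\paragraph{Two smaller points.} First, you do not need the $\ell_\infty$ delocalization $|(U\mean)_j|\lesssim\eps\sqrt{\log d/d}$, which a BRHT does not provide anyway. Since $|(U\mean)_j|\le\|\mean\|_2\le 2\eps$, the constraint $m_k\le r\eps^{-2}$ already gives $\sqrt{m_k}\,|(U\mean)_j|\le 2\sqrt r$. Second, the reduction to $\|\mean\|_2\le 2\eps$ is better justified by monotonicity of $\mathbb{E}T$ along rays, since the Chebyshev bound only improves as $\mathbb{E}T$ grows. Noise injection does not work as stated, because the noise level would depend on the unknown $\|\mean\|_2$.
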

To interpret this result, observe that, at one extreme, with $m_k=\Theta(1)$ for all $k$, the expression matches the \emph{public-coin} bound. Moreover, we note that for a fixed ``budget'' of samples $m = \sum_{k=1}^n m_k$, a ``balanced'' allocation of samples across users maximizes the left-hand side, and hence is better (at least for our testing protocol).

\begin{theorem}[Heterogeneous communication] \label{combudgetsdgmt}
    For $\bm{\ell} = (\ell_1,\dots, \ell_n) \in \mathbb{Z}_+^n$ and $s= \Omega(\log(d/\|\bm{\ell}\|_\infty))$, there exists a protocol for distributed Gaussian mean testing under $\bm{\ell}$-bit communication constraint with $n$ users, where user $k$ has one sample and can send $\ell_k$ bits to the referee, provided that
    \[
    \frac{\|\bm{\ell}\|_1}{\sqrt{\|\bm{\ell}\|_\infty}}  \geq C\cdot\frac{d}{\eps^2} 
    \]
    where $C>0$ is an absolute constant.
\end{theorem}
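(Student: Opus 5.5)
We build the protocol from a shared random rotation and one-bit sign messages, with user $k$ using its $\ell_k$ bits on $\ell_k$ coordinates of its rotated sample. Set $L := \|\bm{\ell}\|_\infty$. The users use $s=\Theta(\log(d/L))$ public bits to pick a common $d\times d$ rotation $U$ from a small family that \emph{flattens} $\mean$ with constant probability. Concretely, after rotation a constant fraction of the $\lceil d/L\rceil$ blocks of $L$ consecutive coordinates carry a $\Omega(L/d)$ share of $\|\mean\|_2^2$. This is the step already used (in the form of Theorem~\ref{sdgmt} with $s=\Omega(\log(d/\ell))$) for the public-coin bound, and we reuse it as a black box. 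The same $s$ public bits also select a random block index $J\in[d/L]$. User $k$ then looks at the first $\ell_k$ coordinates $i\in B_J$ of its rotated sample $U\bm{X}^{(k)}$ and sends the bits $\bm{Y}^{(k)}_i=\mathrm{sign}((U\bm{X}^{(k)})_i)$. This uses exactly $\ell_k$ bits.

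\textbf{Step 1 (statistic).} For each coordinate $i\in B_J$, let $S_i$ be the set of users who reported coordinate $i$, with $N_i:=|S_i|$. Write $Z^{(k)}_i=\pm1$ for the bits, so $\mathbb{E}Z^{(k)}_i=\theta_i := 2\Phi(\nu_i)-1 \asymp \nu_i$ for $|\nu_i|\le1$, where $\nu=U\mean$. The referee computes the U-statistic
\[
T=\sum_{i\in B_J}\ \sum_{k\neq k'\in S_i} Z^{(k)}_iZ^{(k')}_i .
\]
Under $\mathcal{H}_0$, $T$ has mean $0$ and variance $\Theta(\sum_i N_i^2)$. Under $\mathcal{H}_1$, its mean is $\sum_i N_i(N_i-1)\theta_i^2$.

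\textbf{Step 2 (counting).} Since users fill coordinates of $B_J$ from the first one onward, $N_i=|\{k:\ell_k\ge i\}|$ is nonincreasing in $i$, and $\sum_i N_i=\|\bm{\ell}\|_1$. We may assume the flattening is uniform within the block in expectation over $U$, so that $\theta_i^2\gtrsim \eps^2/d$ on average over coordinates (the standard flattening guarantee). The mean gap is then about $(\eps^2/d)\sum_i N_i^2$, up to the $N_i(N_i-1)$ correction, and the noise is $\sqrt{\sum_i N_i^2}$. Chebyshev succeeds once $\sqrt{\sum_i N_i^2}\gtrsim d/\eps^2$. By Cauchy--Schwarz over the at most $L$ coordinates, $\sum_i N_i^2\ge (\sum_i N_i)^2/L=\|\bm{\ell}\|_1^2/L$, so the condition holds under the hypothesis $\|\bm{\ell}\|_1/\sqrt{L}\ge C d/\eps^2$. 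We also need the alternative variance under control. It is at most $\sum_i N_i^2+\sum_i N_i^3\theta_i^2$, and the second term is dominated by the squared mean gap once the mean gap exceeds the null standard deviation, as in the usual U-statistic analysis.

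\textbf{Main obstacle.} The main obstacle is Step 2's averaging. The signal is not uniform across coordinates, and the heavy coordinates (small $i$, large $N_i$) must receive their fair share of $\|\nu_{B_J}\|^2$. I would handle this by also applying a uniformly random permutation within the block, drawn with $O(L\log L)$ extra public bits. If the budget $s=\Omega(\log(d/L))$ forbids that, I would instead rely on rotation invariance in distribution: the rotation family yields $\mathbb{E}[\nu_i^2]=\|\mean\|^2/d$ for every $i$ together with a second-moment bound on $\nu_i^2$. Combining this with Paley--Zygmund on $\sum_i N_i^2\nu_i^2$ gives a constant success probability. Finally, we repeat $O(1)$ times, or absorb the failure into the constant, to reach error $1/10$. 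Users with $\ell_k>1$ who lie in the $|\nu_i|>1$ regime are handled by truncation, since $\theta_i$ is then bounded below by a constant.
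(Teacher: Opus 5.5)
There is a genuine gap, and it sits exactly at the step you call the main obstacle. It is created by your allocation rule, in which every user reports the \emph{first} $\ell_k$ coordinates of $B_J$. That rule lets the weights $N_i(N_i-1)$ in $\mathbb{E}[T]=\sum_i N_i(N_i-1)\theta_i^2$ concentrate on a few fixed coordinates, so you need signal on those particular coordinates. The flattening you import (the compression lemma behind Theorem~\ref{sdgmt}) only controls the energy carried by whole blocks of $d_s=d/2^{\lfloor s/28\rfloor}$ consecutive coordinates, e.g.\ the prefix $[1:\tilded]$.

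With few coins a per-coordinate guarantee is impossible. If $2^s<d$, some $\mean$ with $\|\mean\|_2=\eps$ is orthogonal to all the (at most $2^s$) vectors $U^\top e_{(J-1)L+1}$. Now take $\ell_1=L$ and $\ell_k=1$ for $k\ge 2$. Then $N_1=n$ and $N_i\le 1$ for $i\ge 2$, so $T$ only involves the first coordinate of $B_J$. It therefore has the same law under this $\mean$ as under $\bm{0}$, even though $(L+n-1)/\sqrt{L}\ge Cd/\eps^2$ holds for $n$ large. For the BRHT, $\mean=\frac{\eps}{\sqrt2}(e_1-e_2)$ already works: $(R\mean)_j=0$ at every odd $j$ whenever $d_s\ge 2$.

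Neither of your fallbacks fits the budget $s=\Theta(\log(d/\|\bm{\ell}\|_\infty))$. The within-block permutation costs $\Theta(L\log L)$ bits. The claim ``$\mathbb{E}[\nu_i^2]=\|\mean\|_2^2/d$ for every $i$'' is false on two counts:
\begin{itemize}
\item For the BRHT, $\mathbb{E}[(R\mean)^2_{(r-1)d_s+j}]=\frac{d_s}{d}\sum_q (H_{d_s}\mean_q)_j^2$, where $\mean_q$ is the $q$-th length-$d_s$ block of $\mean$, and this depends on $j$.
\item For any family of $M<d$ matrices, the claim would force the rank-$\le M$ matrix $\frac1M\sum_{m} U_m^\top e_ie_i^\top U_m$ to equal $\frac1d\mathbb{I}_d$.
\end{itemize}
Per-coordinate $4$-wise independent signs would make your argument go through, but they cost $\Theta(\log d)$ shared bits. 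That is far more than allowed when $\|\bm{\ell}\|_\infty$ is close to $d$; the theorem even allows $s=0$ when $\|\bm{\ell}\|_\infty=d$.

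The missing idea is to make the coverage uniform rather than prefix-heavy, which is what the paper's proof does. Set $\tilded=\max(d_s,\|\bm{\ell}\|_\infty)$, and let user $k$ report the signs of $(R\bm{X}^{(k)})_{[1:\tilded]}$ at positions $\sum_{j<k}\ell_j+1,\dots,\sum_{j\le k}\ell_j$, taken cyclically modulo $\tilded$. Since $\ell_k\le\tilded$, every coordinate is reported by $\lfloor\|\bm{\ell}\|_1/\tilded\rfloor$ or one more users, and all reported bits are independent given $R$. The referee thus effectively holds $\lfloor\|\bm{\ell}\|_1/\tilded\rfloor$ i.i.d.\ samples of a $\tilded$-dimensional binary product distribution.

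The signal then enters only through $\|(R\mean)_{[1:\tilded]}\|_2^2$, which is exactly what \cref{lemma:compression} controls with $O(\log(d/\|\bm{\ell}\|_\infty))$ bits. The centralized test needs $\|\bm{\ell}\|_1/\tilded\gtrsim d/(\eps^2\sqrt{\tilded})$, and since $\tilded=\Theta(\|\bm{\ell}\|_\infty)$ this is the stated condition. With this allocation your U-statistic and variance bounds go through essentially unchanged, and your Cauchy--Schwarz step becomes tight up to constants.
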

Letting $\tilde{\ell} = \frac{1}{n}\sum_{k=1}^n \ell_k$ be the average communication budget, one can rephrase the sufficient condition as
    \[
    n\cdot \frac{\tilde{\ell}}{\sqrt{\max_{1\leq k\leq n} \ell_k}}  \geq C\cdot\frac{d}{\eps^2} 
    \]
which, for a fixed value of $\tilde{\ell}$, is easiest to satisfy when all communication constraints are the same (``balanced'').\medskip

We emphasize that this separation in three subproblems is for clarity and simplicity of exposition: we also provide a protocol involving a ``mix-and-match'' of all parameters, where public randomness $s$, sample counts $\bm{m} = (m_1,\dots, m_n)$, and individual communication budgets $\bm{\ell} = (\ell_1,\dots, \ell_n)$ are all kept as free parameters. The guarantees of this protocol are summarized in the following theorem:
\begin{theorem}[Everything, everywhere, all at once]\label{theorem:mix-and-match}
     For $s\geq 0$, $\bm{m} = (m_1,\dots, m_n) \in \mathbb{Z}_+^n$, and $\bm{\ell} = (\ell_1,\dots, \ell_n) \in \mathbb{Z}_+^n$, there exists a protocol for distributed Gaussian mean testing with $n$ users sharing $s$ bits of public randomness, where user $k$ has $m_k$ samples and can send $\ell_k$ bits to the referee, provided that there 
    exists a partition $\mathcal{P} = \{P_1,\dots,P_K\}$ of the $n$ users such that
    \[
    \frac{ \sum_{j_1\neq j_2 \in [K]}\sqrt{\min_{i_1\in P_{j_1}}(m_{i_1}) \cdot \min_{i_2\in P_{j_2}}(m_{i_2})}}{K} \ge C \cdot \frac{d}{\eps^2\sqrt{\max\left(\frac{d}{2^{\Theta(s)}} ,\|\bm{\ell}\|_\infty \right)}}
    \]
    and, for every $P \in \mathcal{P}$, $\sum_{i\in P}\ell_i \ge r\cdot\max\left(\frac{d}{2^{\Theta(s)}} ,\|\bm{\ell}\|_\infty \right)$ and $\min_{i \in P}(m_i) \in [r,r\eps^{-2}]$;
    where $C,r>0$ are absolute constants.
\end{theorem}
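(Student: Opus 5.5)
Let $L := \max\bigl(d/2^{\Theta(s)}, \|\bm{\ell}\|_\infty\bigr)$. The plan is to reduce to a homogeneous "virtual user" problem and run one combined protocol: each part $P_j$ of the partition acts as a single virtual user holding $m'_j := \min_{i\in P_j} m_i$ samples and sending $\Theta(L)$ bits. The $s$ public bits are used as in \cref{sdgmt}, to pick one of $2^{\Theta(s)}$ random rotations (or coordinate blocks) shared by all users. The private/public interpolation from \cref{sdgmt} then applies with $L$ playing the role of $\ell$. The test statistic is the heterogeneous $U$-statistic of \cref{msampledgmt}, with $K$ virtual users carrying sample counts $m'_j$.

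\textbf{Step 1: intra-part simulation.} Fix a part $P$. Every user $i\in P$ uses only $m'_j$ of its samples and forms the normalized average $\bm{Z}^{(i)} = (m'_j)^{-1/2}\sum_{t\le m'_j}\bm{X}^{(i,t)} \sim \mathcal{G}(\sqrt{m'_j}\,\mean,\mathbb{I}_d)$. The condition $m'_j\le r\eps^{-2}$ keeps the effective signal $\sqrt{m'_j}\eps$ bounded, which the one-bit-per-coordinate quantization analysis needs. The condition $m'_j\ge r$ is used as in \cref{msampledgmt}.

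The part jointly encodes the $L$ coordinates of the (randomly rotated) vector that its virtual user would send. It splits these coordinates among its members, with user $i$ sending the signs of $\ell_i$ of them. This is possible since $\sum_{i\in P}\ell_i\ge rL$, so a constant fraction of the block is covered. Because distinct users hold independent samples, these sign bits are not the signs of one vector. However, each bit $\mathrm{sign}(\langle \bm{u}_c, \bm{Z}^{(i)}\rangle)$ has the same marginal law as in the single-user protocol. The single-user analysis only uses the coordinatewise independence of the bits (coordinates of a rotated spherical Gaussian are independent), so this "independent-source" simulation has exactly the same joint law as a single virtual user of \cref{msampledgmt} with $m'_j$ samples.

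\textbf{Step 2: referee test and analysis.} The referee computes the pairwise statistic
\[
  T = \sum_{j_1\neq j_2}\langle \bm{Y}_{j_1},\bm{Y}_{j_2}\rangle
\]
over virtual users, with the same weights as in \cref{msampledgmt}. It then applies the same Chebyshev analysis. Under $\mathcal{H}_1$, the expectation is of order $\sum_{j_1\neq j_2}\sqrt{m'_{j_1}m'_{j_2}}\,\|\mean\|^2 L/d$, since $L$ coordinates carry signal after rotation. The variance is $O(K\cdot L)$ plus cross terms. The interpolation over $s$ enters as in \cref{sdgmt}: with only $2^{\Theta(s)}$ rotations, the energy captured on a block concentrates only up to a $d/2^{\Theta(s)}$ granularity, which is why the block size must be $L$ rather than $\|\bm{\ell}\|_\infty$. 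Substituting $L$ for $\ell$ in the displayed condition of \cref{msampledgmt}, with $K$ virtual users, gives exactly the hypothesis of the theorem.

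\textbf{Main obstacle.} The hard step is combining the limited-randomness argument of \cref{sdgmt} with the heterogeneous weights of \cref{msampledgmt}. Specifically, one must show that with $2^{\Theta(s)}$ shared rotations, the captured signal $\sum_{c\in\text{block}}\langle\bm{u}_c,\mean\rangle^2$ is, with constant probability over the seed, at least $\Omega(\|\mean\|^2 L/d)$ simultaneously for the relevant parts. I would try to have all parts share the same seed and use the same family of blocks. Then a single good event (over the seed) suffices, and the variance bound is computed conditionally on the seed.
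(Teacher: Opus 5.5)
Your proposal is correct and is essentially the paper's argument: each part $P_j$ uses the wrap-around splitting of \cref{combudgetsdgmt:restated} to simulate one virtual user with $\min_{i\in P_j} m_i$ samples and an $L$-bit budget, all users share a single BRHT so that \cref{lemma:compression} (applied to the first $L$ coordinates, a union of whole blocks) gives the single good event you identify as the main obstacle, and \cref{algorithm:mdgmt_alg} is then run on the $K$ virtual users with $\ell = L$ (your check that, conditionally on $R$, the independent-source bits have exactly the joint law of one virtual user is a step the paper leaves implicit). Two small fixes: the null variance of your unnormalized statistic is $\Theta(K^2L)$, not $O(KL)$ (this $K^2$ is what produces the division by $K$ in the hypothesis), and each part must cover all $L$ coordinates rather than only a constant fraction of them (take $r\ge 7$ as the paper does), since \cref{lemma:compression} says nothing about the energy on part of a block.
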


\paragraph{Related work.}
In the centralized setting where all samples are fully available to the referee, the tight sample complexity of for Gaussian mean testing is known, and, while considered ``folklore'', can for instance be found in \cite{inbook,DiakonikolasKP23} (as a special case of more general problem). Subsequent work has expanded on this foundation, exploring Gaussian mean testing under communication constraints \cite{AcharyaCT20} and truncation models \cite{CanonneGWY25}. \cite{AcharyaCT20} studied the problem in two variants of homogeneous regime (one sample and same communication budget $\ell$ for all users), namely the \emph{private-coin} ($s= 0$) and \emph{public-coin} ($s=\infty$), and showed sample complexities of
\[
O\left(\frac{d^{3/2}}{\ell\eps^2}\right) \quad (\emph{private coin}), \ O\left( \frac{d}{\sqrt{\ell}\eps^2}\right) \quad (\emph{public coin})
\]
While the authors only established upper bounds and (limited) lower bounds, follow-up work~\cite{SzaboVZ22,SzaboVZ23} showed these two sample complexities to be tight. (Moreover, exploiting standard properties of the Gaussian distribution, the above easily generalizes to the case where each of the $n$ users holds the same number of samples $m\geq 1$.\footnote{Namely, if $\bm{X}^{(k,1)},\bm{X}^{(k,2)},\cdots, \bm{X}^{(k,m_k)}\sim\mathcal{G}(\mean,\mathbb{I}_d)$, then \smash{$\tilde{\bm{X}}^{(k)} := \frac{1}{\sqrt{m}}\sum_{i=1}^m\bm{X}^{(k,i)} \sim \mathcal{G}(\sqrt{m}\mean,\mathbb{I}_d)$}, so one can use the one-sample protocol with new distance parameter $\eps' = \sqrt{m}\eps$ and ``single sample per user'' $\tilde{\bm{X}}^{(k)}$.\label{ft:aggregate}})

The above works reveal a $\sqrt{d/\ell}$ gap between the private- and public-coin sample complexities, highlighting the value of shared randomness. However, the public-coin protocol of \cite{AcharyaCT20} (as well as the similar protocol of~\cite{SzaboVZ23}) rely on jointly sampling a uniformly random rotation of the space, which technically requires infinitely many shared random bits~--~even with adequate discretization, the amount of shared randomness necessary is still impractical. One can thus wonder what can be achieved with a \emph{smaller} ``shared randomness budget:'' similarly, what happens if each user holds a different number of samples $m_k$, or can communicate a different number of bits $\ell_k$. To the best of our knowledge, all of these remain essentially understudied and open in the context of Gaussian mean testing, and even distribution testing at large, with the two exceptions outlined below. 

On the ``limited randomness'' side, our techniques and results for the $s$-public coin setting are reminiscent of the work of \cite{AcharyaCHST20}, which study the problem of \emph{identity testing for discrete distributions} under communication constraints. They provide an optimal protocol for identity testing of distributions over a finite domain $[k]$ under $\ell$-bit communication constraint with 
\[
    n = \Theta\left(\frac{\sqrt{k}}{\eps^2}\sqrt{\frac{d}{2^\ell}\vee1}\sqrt{\frac{k}{2^{\ell+s}}\vee1}\right)
\] 
users, giving clear interpolation between \emph{private-} and \emph{public-coin}. Our Blockwise Randomized Hadamard Transform (BRHT) primitive (see~\cref{sec:brht}) for compression is similar to their technique ``Domain Compression'', showing an exponential saving (in the number of shared random bits) in the sample complexity of the problem. Interestingly, however, both the actual techniques used, and the resulting tradeoffs, are quite incomparable: the upshot of \cite{AcharyaCHST20} was that ``one bit of communication is equivalent to two bits of shared randomness'' (in terms of sample complexity), which does no longer hold in the Gaussian mean testing setting.

On the ``heterogeneous communication'' side, while we are not aware of previous work for the case of Gaussian mean testing, we note that in the case of private-coin protocols for testing of discrete distributions, a baseline upper bound is hinted at in~\cite[Exercise~4.4]{Canonne22}, while \emph{lower bounds} for \emph{estimation} (not testing) are derived in~\cite{AcharyaCST23}. 

We conclude by mentioning that, in the (as discussed above, analogous, yet orthogonal) setting of identity testing of discrete distributions, under local privacy constraints, the recent work~\cite{CanonneGS26} states that \emph{``it is not at all obvious how this [heterogeneous sample sizes] could be handled neatly, and general results for this model could greatly
help practical implementations.''} While our results do not directly address their question (as we deal with mean testing and not discrete distributions, and communication instead of local privacy constraints), we hope that some of our ideas can help in this regard, especially given the observed similarities between communication and local privacy constraints~\cite{AcharyaCST23}.

\section{Preliminaries} 
\paragraph{Notation.} Throughout, we write $[n] := \{1,2,\dots,n\}$, and denote by $\bm{u}_d := (1/2, 1/{2},\dots, 1/2) \in \mathbb{R}^d$ the mean vector of a $d$-dimensional binary product uniform distribution over $\{0,1\}^d$, by $\mathbb{I}_d$ the $d$-dimensional identity matrix. We use the standard asymptotic notations $O(\cdot),\Omega(\cdot)$ and $\Theta(\cdot)$.

\paragraph{Simplification.} For clarity of exposition, we assume throughout that $d = 2^{k_1}$ and $\ell = 2^{k_2}$ for some integers $k_1,k_2\ge0$. This assumption is without loss of generality up to constant factors. Indeed, for arbitrary $d$, one can  embed into dimension $d'$ with \[d\le d'<2d, \qquad d' = 2^{k_1}\] via zero-padding. Similarly, for arbitrary $\ell$, one can use $\ell'$ bits with \[\ell/2<\ell'\le\ell, \qquad \ell' = 2^{k_2}.\] 

\noindent We will heavily rely on the following tools and techniques from the literature, which we recall below.

\begin{definition}[Normalized Sylvester Hadamard Matrix]
The \emph{normalized Sylvester Hadamard matrix} $H_d \in \mathbb{R}^{d\times d}$ is recursively defined by
\[
    H_1 := [1],
\]
and for every integer $d \ge 1$
\[
    H_{2d} := \frac{1}{\sqrt{2}}\begin{bmatrix}
        H_d & H_d \\ H_d & -H_d
    \end{bmatrix}
\]  
\end{definition}

\begin{lemma}[Paley--Zygmund inequality]\label{lemma:paley-zygmund} For a non-negative random variable $Z$ and $0\le\theta\le1$
\[
    \Pr\!\left[Z > \theta\mathbb{E}[Z] \right] \ge (1-\theta)^2\frac{\mathbb{E}[Z]^2}{\mathbb{E}[Z^2]}
\]
\end{lemma}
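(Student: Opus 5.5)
The statement immediately above is the Paley--Zygmund inequality (\cref{lemma:paley-zygmund}). The plan is the classical second-moment argument: split $\mathbb{E}[Z]$ according to whether $Z$ exceeds the threshold $\theta\mathbb{E}[Z]$, then control the large part by Cauchy--Schwarz.

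First, dispose of the degenerate cases.
\begin{itemize}
\item If $\mathbb{E}[Z^2]=\infty$, the right-hand side is $0$ and there is nothing to prove.
\item If $\mathbb{E}[Z]=0$, then again the right-hand side vanishes.
\item If $\mathbb{E}[Z^2]=0$, then $Z=0$ almost surely, so $\mathbb{E}[Z]=0$ as well; with the convention $0/0=0$ this case is trivial.
\end{itemize}
We may therefore assume $0<\mathbb{E}[Z]$ and $\mathbb{E}[Z^2]<\infty$. Since $\mathbb{E}[Z]\le\sqrt{\mathbb{E}[Z^2]}$, the mean is finite too.

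Next, write $A$ for the event $\{Z>\theta\mathbb{E}[Z]\}$ and decompose
\[
\mathbb{E}[Z]=\mathbb{E}[Z\mathbf{1}_{A^c}]+\mathbb{E}[Z\mathbf{1}_{A}].
\]
On $A^c$ we have $0\le Z\le\theta\mathbb{E}[Z]$, using non-negativity of $Z$. Hence the first term is at most $\theta\mathbb{E}[Z]$.

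For the second term, apply Cauchy--Schwarz to $Z\cdot\mathbf{1}_A$:
\[
\mathbb{E}[Z\mathbf{1}_A]\le\sqrt{\mathbb{E}[Z^2]\Pr[A]}.
\]
Combining the two bounds and rearranging gives
\[
(1-\theta)\mathbb{E}[Z]\le\sqrt{\mathbb{E}[Z^2]\Pr[A]}.
\]
The left-hand side is non-negative because $\theta\le1$, so squaring preserves the inequality. Dividing by $\mathbb{E}[Z^2]>0$ yields exactly the claimed bound.

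There is no real obstacle here. The only points needing care are the edge cases (vanishing or infinite moments) and checking that $1-\theta\ge0$ before squaring.
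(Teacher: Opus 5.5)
Your proof is correct. It is the classical second-moment argument: split $\mathbb{E}[Z]$ over the event $\{Z>\theta\mathbb{E}[Z]\}$, bound the lower part by $\theta\mathbb{E}[Z]$, and bound the upper part by Cauchy--Schwarz. The paper does not prove this lemma; it quotes it as a standard tool, so there is no alternative route to compare with.

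Your degenerate cases are slightly loose. Since $Z\ge 0$, $\mathbb{E}[Z]=0$ forces $Z=0$ almost surely, so your second bullet is really the $0/0$ case of your third. Also, if $\mathbb{E}[Z]=\mathbb{E}[Z^2]=\infty$, the right-hand side is undefined rather than $0$. Neither point matters here, because in the paper's application $Z$ is a bounded function of finitely many Rademacher variables.
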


\paragraph{Binary product mean testing.} Let $P$ be an unknown $d$-dimensional binary product distribution  over $\{0,1\}^d$ with mean vector $\bm{p} \in (0,1)^d$. Given i.i.d.\ samples from $P$ and a threshold $\eps > 0$, the binary product mean testing problem is to distinguish between cases $\bm{p} = \bm{u}_d$ versus $\|\bm{p} - \bm{u}_d\|_2\ge \eps$ with probability at least $99/100$. It is known that the optimal sample complexity for this problem is $\Theta(\sqrt{d}/\eps^2)$, and can be achieved by~\cref{uniformity_testing} below. For completeness, we provide the analysis of~\cref{uniformity_testing} in Appendix, ~\cref{proof_uniformity_testing}.  

\begin{algorithm} 
        \caption{Binary product mean testing}
        \label{uniformity_testing}
        \begin{algorithmic}[1]
            \Require Samples $\bm{X}^{(1)},\dots,\bm{X}^{(n)} \in \{0,1\}^d$; threshold $\eps > 0$. 
            \State $\tau \gets \eps^2/2 $ 
            \For{$i \in [d]$}
            \State $T_i \gets \sum_{k_1 \neq k_2 \in [n]}(X_i^{(k_1)}-1/2)(X_i^{(k_2)}-1/2)$ 
            \EndFor
            \State $T \gets \frac{1}{n(n-1)}\sum_{i=1}^d T_i$
            \If {$T > \tau$}
            \State \Return $\textsf{reject}$ 
            \Else
            \State \Return $\textsf{accept}$ 
            \EndIf
        \end{algorithmic}
\end{algorithm}

\paragraph{Private-coin.} As discussed above, the private-coin setting, which corresponds to $s=0$ bits of shared randomness, was studied for Gaussian mean testing in \cite{AcharyaCT20}. We briefly describe their protocol, as we will use it later on, which uses two main ingredients: a reduction to the binary product case, and a simple private-coin algorithm to solve the latter.
\begin{lemma}[Reduction to binary product mean testing] \label{lemma:reduction_to_bpmt}
Let $\eps\in(0,1)$. If there exists an algorithm for a binary product mean testing with $n(d,\eps)$ samples, there exists an algorithm for Gaussian mean testing with $n(d,\eps/\sqrt{8})$.
\end{lemma}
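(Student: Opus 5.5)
The plan is to turn each Gaussian sample into a binary vector through coordinate-wise sign quantization. Given $\bm{X}\sim\mathcal{G}(\mean,\mathbb{I}_d)$, each user forms $\bm{B}\in\{0,1\}^d$ with $B_i := \mathbb{1}[X_i > 0]$. Since the coordinates of $\bm{X}$ are independent, $\bm{B}$ follows a binary product distribution with mean $p_i = \Phi(\mu_i)$, where $\Phi$ is the standard normal CDF. Under $\mathcal{H}_0$ we have $\bm{p} = \bm{u}_d$ exactly. So a binary product mean tester run on $n$ such vectors tests the Gaussian problem, as long as $\|\mean\|_2\ge\eps$ forces $\|\bm{p}-\bm{u}_d\|_2 \ge \eps/\sqrt{8}$. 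The reduction maps each sample separately, so the sample count is preserved and we get $n(d,\eps/\sqrt 8)$.

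The core step is a lower bound on $|\Phi(\mu_i)-1/2|$ in terms of $\mu_i$. For $|x|\le 1$ we have $|\Phi(x)-1/2| = \int_0^{|x|}\phi(t)\,dt \ge |x|\,\phi(1)$. Since $\phi(1) = e^{-1/2}/\sqrt{2\pi}\approx 0.242 > 1/\sqrt 8\approx 0.354$ fails, the constant needs more care. For $|x|\ge 1$ we only get $|\Phi(x)-1/2|\ge \Phi(1)-1/2\approx 0.341$.

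The main obstacle is that coordinates with large $|\mu_i|$ saturate. I would handle it with two cases.

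\begin{itemize}
\item \textbf{Some coordinate has $|\mu_i|\ge 1$.} Then $\|\bm{p}-\bm{u}_d\|_2\ge 0.34 \ge \eps/\sqrt8$, because $\eps\le 1$ gives $\eps/\sqrt 8\le 0.354$. This case is borderline, so I would instead use the threshold $|\mu_i|\ge c$ for a suitable $c<1$ to get slack.
\item \textbf{All coordinates have $|\mu_i|\le 1$.} Summing $(\Phi(\mu_i)-1/2)^2\ge \phi(1)^2\mu_i^2$ gives $\|\bm{p}-\bm{u}_d\|_2\ge \phi(1)\,\eps\approx 0.242\,\eps$, which falls short of $0.354\,\eps$.
\end{itemize}

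Since sign quantization seems to give only a constant like $0.24$, I would instead use the threshold $1/\sqrt8$ as stated. The cleanest fix is to apply a random rotation first so the mass spreads across coordinates, which lets us use a sharper bound near $0$, where $\phi(0)=1/\sqrt{2\pi}\approx0.399>0.354$. Alternatively, I would read the $\sqrt 8$ as coming from $|\Phi(x)-1/2|\ge \min(|x|,1)/\sqrt 8$; I would verify this numerically at $x=1$ (giving $0.341\ge0.354$?, which fails slightly). Failing both, I would reduce $\eps$ by the required constant factor, since constants only affect $n$ by a constant. In short, the proof consists of the reduction $B_i=\mathbb{1}[X_i>0]$, the exact null preservation, and a calculus lemma relating $\|\Phi(\mean)-\bm{u}_d\|_2$ to $\min(\|\mean\|_2,1)$. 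Pinning the exact constant $\sqrt8$ is the main difficulty. I expect the authors handle it via $|\Phi(x)-1/2|\ge \tfrac{1}{\sqrt{8}}\min(|x|,1)$ or via a modified quantization.
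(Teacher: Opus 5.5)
\textbf{The reduction matches the paper's; the quantitative step is never proved.} You use exactly the paper's reduction: coordinatewise signs, product structure, exact preservation of the null, and one binary vector per sample. But the proposal never proves the one quantitative step, a bound $\|\bm p-\bm u_d\|_2\ge c\,\eps$. It ends with a list of fallbacks that are not carried out. The concrete problems are these.
\begin{itemize}
\item The bound $|\Phi(x)-\tfrac12|\ge\phi(1)|x|$ uses the minimum slope. Since $\psi(x):=\Phi(x)-\tfrac12$ is concave on $[0,\infty)$ (because $\psi''(x)=-x\phi(x)$), the chord gives the stronger $|\Phi(x)-\tfrac12|\ge\psi(1)|x|\approx0.341|x|$ for $|x|\le1$.
\item Your first case asserts $0.34\ge\eps/\sqrt8$, which fails for $\eps$ close to $1$.
\item Moving the threshold down to $c<1$ lowers $\Phi(c)-\tfrac12$, so it makes that case worse, not better.
\item The random rotation changes the reduction and cannot help when $d=1$.
\end{itemize}
With the chord bound, your own case split closes immediately: $\|\bm p-\bm u_d\|_2\ge\psi(1)\min(\|\mean\|_2,1)\ge0.341\,\eps>\eps/3$ for $\eps\le1$. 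That is the lemma with $3$ in place of $\sqrt8$. It is your ``shrink $\eps$ by a constant'' fallback made explicit, and it is all that any later use of the lemma needs.

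\textbf{The constant $\sqrt8$ itself is not attainable near $\eps=1$.} No argument via sign quantization can give $\sqrt8$ for $\eps$ near $1$. For $\mean=\eps\bm e_1$ one gets $\|\bm p-\bm u_d\|_2=\Phi(\eps)-\tfrac12$. At $\eps=0.99$ this is about $0.339$, below $0.99/\sqrt8\approx0.350$; the bound fails for $\eps$ above roughly $0.87$. This instance is also the worst case. The function $h(t):=\psi(\sqrt t)^2$ has non-increasing derivative $\psi(\sqrt t)\phi(\sqrt t)/\sqrt t$, so it is concave with $h(0)=0$, hence subadditive. Therefore $\|\bm p-\bm u_d\|_2^2=\sum_i h(\mu_i^2)\ge h(\|\mean\|_2^2)\ge(\Phi(\eps)-\tfrac12)^2$. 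Your numerical check at $x=1$ was detecting a real obstruction, not a lack of care on your part.

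\textbf{The paper's proof has the same hole.} It claims $\sum_i(p_i-\tfrac12)^2\ge\tfrac18\min(\|\mean\|_2^2,d/2)$ ``by standard properties of Erfc''. This already fails for $\mean=0.99\,\bm e_1$ when $d\ge2$, and for a single very large coordinate when $d>4$. Replacing $d/2$ by $\tfrac12$ makes it true, but then it only yields $\eps^2/8$ for $\eps\le1/\sqrt2$. So the statement should be read with $\sqrt8$ replaced by $3$, or restricted to $\eps\le0.87$. As written, your proposal establishes neither version; the chord bound, with subadditivity if you want the sharp form, is the missing piece.
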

\begin{proof}
    Let $\bm{X}^{(1)},\dots,\bm{X}^{(n)}$ be i.i.d.\ samples from $\mathcal{G}(\mean,\mathbb{I}_d)$. Define \[\bm{Y}^{(k)} := \mathbb{I}(\bm{X}^{(k)} > 0) \in \{0,1\}^d.\] 
    Then the $\bm{Y}^{(k)}$ are i.i.d.\ samples from a binary product distribution $P$ over $\{0,1\}^d$ with mean vector $\bm{p} \in (0,1)^d$. We distinguish the two cases,
    \paragraph{Case 1: $\|\mean\|_2 = 0$.} Then, for all $i$, $p_i =\Pr[\bm{Y}^{(k)}_i = 1] = 1/2$. Thus, $\bm{p} = \bm{u}_d$.
    \paragraph{Case 2: if $\|\mean\|_{2} \ge \eps$.} Then, for all $i$, $p_i =\Pr[\bm{Y}^{(k)}_i =1]=\frac{1}{2}\operatorname{Erfc}\left(-\mu_i/\sqrt{2}\right)$. Using standard properties of Erfc, we have
    \[
    \sum_{i=1}^d(p_i-1/2)^2 \ge \frac{1}{8}\min\left(\sum_{i=1}^d\mu_i^2,d/2 \right) = \frac{1}{8}\min\left(\|\mean\|_2^2,d/2 \right) \ge \frac{\eps^2}{8}
    \]
    Thus, $\|\bm{p}-\bm{u}_d\|_2 \ge \eps/\sqrt{8}$.

    Hence, to distinguish between the two cases, it is sufficient to test if $\bm{p} = \bm{u}_d$ versus $\|\bm{p}-\bm{u}_d\|_2 \ge \eps/\sqrt{8}$, which is exactly the binary product mean testing problem. 
\end{proof}

\begin{theorem}[\cite{AcharyaCT20} \label{private_coin} Theorem $1$] For all $\ell \in [d]$, there exists a private-coin protocol for distributed Gaussian mean testing under $\ell$-bit communication constraint with $O(d^{3/2}/\ell\eps^2)$ users.
\end{theorem}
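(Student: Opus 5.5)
The approach is the standard private-coin scheme of \cite{AcharyaCT20}: each user first binarizes its sample, then reveals a fixed small block of coordinates. The referee runs \cref{uniformity_testing} separately on the users assigned to each block. Finally, the two reductions and the variance analysis of \cref{uniformity_testing} are combined.

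\textbf{Binarize.} By \cref{lemma:reduction_to_bpmt}, user $k$ replaces its sample $\bm{X}^{(k)}$ with $\bm{Y}^{(k)}=\mathbb{I}(\bm{X}^{(k)}>0)\in\{0,1\}^d$. These are i.i.d.\ draws from a product distribution with mean $\bm{p}$. We have $\bm{p}=\bm{u}_d$ under $\mathcal{H}_0$ and $\|\bm{p}-\bm{u}_d\|_2\ge\eps/\sqrt8$ under $\mathcal{H}_1$. So it suffices to solve binary product mean testing at distance $\eps'=\eps/\sqrt8$.

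\textbf{Partition and send.} Split $[d]$ into $B=d/\ell$ disjoint blocks $S_1,\dots,S_B$ of $\ell$ coordinates each. The split is deterministic, so no shared randomness is needed. Split the users into $B$ groups of size $N=n/B$, and let every user in group $j$ send the $\ell$ bits $\bm{Y}^{(k)}_{S_j}$. Since the distribution is a product, the messages in group $j$ are $N$ i.i.d.\ samples of the marginal on $S_j$, with mean $\bm{p}_{S_j}$.

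\textbf{Referee's statistic.} For each block the referee forms the U-statistic $T^{(j)}=\frac{1}{N(N-1)}\sum_{k_1\ne k_2}\sum_{i\in S_j}(Y_i^{(k_1)}-\tfrac12)(Y_i^{(k_2)}-\tfrac12)$. It then sets $T=\sum_j T^{(j)}$, which is unbiased for $\|\bm{p}-\bm{u}_d\|_2^2$, and thresholds at $\eps'^2/2$.

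\textbf{Variance and Chebyshev.} This is the step that needs care: the blocks use disjoint users, so the $T^{(j)}$ are independent and $\operatorname{Var}[T]=\sum_j\operatorname{Var}[T^{(j)}]$. From the analysis of \cref{uniformity_testing} in dimension $\ell$ with $N$ samples, $\operatorname{Var}[T^{(j)}]\lesssim \frac{\ell}{N^2}+\frac{\|\bm{p}_{S_j}-\bm{u}\|_2^2}{N}$. Summing over blocks gives
\[
\operatorname{Var}[T]\lesssim \frac{d}{N^2}+\frac{\|\bm{p}-\bm{u}_d\|_2^2}{N}.
\]
We need the standard deviation to be at most a small multiple of $\max(\eps'^2,\|\bm{p}-\bm{u}_d\|_2^2)$. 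The first term requires $\sqrt d/N\lesssim\eps^2$, i.e.\ $N\gtrsim\sqrt d/\eps^2$. The second term requires $N\gtrsim 1/\|\bm{p}-\bm{u}_d\|_2^2$, which is implied by the first since $\|\bm{p}-\bm{u}_d\|_2\gtrsim\eps$. Chebyshev then bounds both errors by $1/10$.

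The total number of users is $n=BN=O\!\left(\frac{d}{\ell}\cdot\frac{\sqrt d}{\eps^2}\right)=O(d^{3/2}/(\ell\eps^2))$, as claimed. If $\ell\nmid d$, the simplification to powers of two from the Preliminaries handles it.
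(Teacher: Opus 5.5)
Your proof is correct and follows the paper's route. In the paper's ``simulate and infer'' step, $d/\ell$ users each send a different block of $\ell$ coordinates, so by the product structure their concatenated messages form one exact sample of $P$; running \cref{uniformity_testing} on these $n\ell/d$ simulated samples gives exactly your statistic $T=\sum_j T^{(j)}$. You simply verify the variance bound block by block, where the paper instead observes that the simulated samples are i.i.d.\ from $P$ and reuses the centralized $\Theta(\sqrt d/\eps^2)$ guarantee.
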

\begin{proof}
    Let $\bm{X}^{(1)},\dots,\bm{X}^{(n)}$ be i.i.d.\ samples from $\mathcal{G}(\mean,\mathbb{I}_d)$ where user $k$ observes $\bm{X}^{(k)}$. Define $\bm{Y}^{(k)} := \mathbf{1}(\bm{X}^{(k)} > 0) \in \{0,1\}^d$. Then $\bm{Y}^{(k)}$ are i.i.d.\ samples from a binary product distribution $P$ over $\{0,1\}^d$ with mean vector $\bm{p} \in (0,1)^d$. First, note that if $\ell = d$, each user can transmit the entire vector $\bm{Y}^{(k)}$, yielding a protocol with $\Theta(\sqrt{d}/\eps^2)$ users using~\cref{lemma:reduction_to_bpmt}.

    When $\ell < d$, we use the ``\emph{simulate and infer}'' technique of \cite{ACT:IT2}: $d/\ell$ users can simulate \emph{one} sample of the $d$-dimensional product distribution, by having send (a different set of) $\ell$ coordinates. Combining this with the $\Theta(\sqrt{d}/\eps^2)$ centralized sample complexity of binary product mean testing, this gives  a protocol with sample complexity
    \[
    \frac{d}{\ell}\cdot O\left(\frac{\sqrt{d}}{\eps^2}\right) = O\left(\frac{d^{3/2}}{\ell\eps^2}\right)
    \]
    as claimed.
\end{proof}

\section{Blockwise Randomized Hadamard Transform}
    \label{sec:brht}
In this section, we define and analyze the properties of a key primitive we introduce, the \emph{Blockwise Randomized Hadamard Transform (BRHT)}. At a high level, the idea is to replace a uniformly random rotation of $\mathbb{R}^d$ by a discrete analogue, which can be sampled using a \emph{finite} number of random bits. An obvious candidate would be to use either a matrix of uniformly random $\pm 1$; however, our aim is to preserve the $\ell_2$ geometry, and this fails to do so. Indeed, we want an isometry, something which would play the same role as a uniformly random rotation, and in particular a transformation $R$ which, when applied to a spherical Gaussian, would still result in a spherical Gaussian: which requires $R^\top R= \mathbb{I}_d$. This makes using a Hadamard matrix a very natural idea: however, with such a deterministic construction, we lack the randomness which enables more efficient testing protocols. To remedy this while still using a minimum amount of randomness, we consider a structured random orthogonal transform derived from the Hadamard matrix, which we term Blockwise Randomized Hadamard Transform (BRHT), and provides strong sketching guarantees with control over randomness requirement. We formally define the BRHT below.
\begin{definition}[Blockwise Randomized Hadamard Transform (BRHT)]\label{def:BRHT}
Let $d,\tilded \in \mathbb{N}$ satisfy $\tilded \mid d$, and define $b := d/\tilded$. A \emph{$(d,\tilded)$-Blockwise Randomized Hadamard Transform (BRHT)} is the random matrix
\[R := H_dD\]
where $H_d \in \mathbb{R}^{d\times d}$ is a normalized Sylvester Hadamard matrix and $D \in \mathbb{R}^{d\times d}$ is a blockwise diagonal matrix:
\[
    \operatorname{diag}(D) = \left( \bm{\delta}^{(1)},\bm{\delta}^{(2)},\cdots,\bm{\delta}^{(b)} \right)
\]
where for every $i \in [b]$, $\bm{\delta}^{(i)} = (\delta_i,\delta_i,\cdots,\delta_i) \in \{-1,1\}^{\tilded}$ and $\delta_1,\dots, \delta_b$ are $4$-wise independent Rademacher random variables.
\end{definition}

\begin{lemma}[Orthogonality]
Let $R$ be a $(d,\tilded)$-BRHT. Then $RR^\top = \mathbb{I}_d.$
\end{lemma}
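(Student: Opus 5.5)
The plan is to compute $RR^\top$ directly from the definition $R = H_d D$, using that it factors as $RR^\top = H_d D D^\top H_d^\top$. Two facts are needed: $DD^\top = \mathbb{I}_d$ and $H_dH_d^\top = \mathbb{I}_d$. Given both, $RR^\top = H_d \mathbb{I}_d H_d^\top = \mathbb{I}_d$ follows at once. No probabilistic reasoning enters: the identity holds deterministically, for every realization of $\delta_1,\dots,\delta_b$. In particular, the $4$-wise independence in \cref{def:BRHT} plays no role here; only the fact that each $\delta_i\in\{-1,1\}$ is used.

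\textbf{The diagonal factor.} $D$ is diagonal, so $D^\top = D$. Each diagonal entry equals some $\delta_i\in\{-1,1\}$, hence squares to $1$. Therefore $DD^\top = D^2 = \mathbb{I}_d$.

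\textbf{The Hadamard factor.} I would prove $H_dH_d^\top=\mathbb{I}_d$ for every power of two $d$ by induction along the recursive definition. The base case $H_1=[1]$ is immediate. By induction, each $H_d$ is symmetric, since the block form of $H_{2d}$ is symmetric whenever $H_d$ is. For the inductive step, assume $H_dH_d^\top = \mathbb{I}_d$. Then
\[
H_{2d}H_{2d}^\top = \frac{1}{2}\begin{bmatrix} H_d & H_d \\ H_d & -H_d\end{bmatrix}\begin{bmatrix} H_d^\top & H_d^\top \\ H_d^\top & -H_d^\top\end{bmatrix} = \frac12\begin{bmatrix} 2H_dH_d^\top & 0 \\ 0 & 2H_dH_d^\top\end{bmatrix} = \mathbb{I}_{2d}.
\]
Since the paper assumes $d$ is a power of two, this covers the relevant case.

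I do not expect any real obstacle. The only points needing care are the block-matrix multiplication in the inductive step, and recording that orthogonality of a square matrix also gives $R^\top R=\mathbb{I}_d$. That second identity is the one the paper uses to argue that $R$ maps spherical Gaussians to spherical Gaussians.
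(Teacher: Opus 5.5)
Your proposal is correct and follows the same route as the paper: you factor $RR^\top = H_dDD^\top H_d^\top$ and use $DD^\top=\mathbb{I}_d$ (diagonal entries in $\{\pm1\}$) together with the orthogonality of $H_d$. The only difference is that you verify $H_dH_d^\top=\mathbb{I}_d$ by induction on the Sylvester recursion, which the paper takes as known.
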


\begin{proof}
Since $H_d$ is orthogonal and $D$ is diagonal with entries in $\{\pm1\}$, $RR^\top = (H_dD)(H_dD)^\top = H_dDD^\top H_d^\top = H_dH_d^\top = \mathbb{I}_d.$
\end{proof}

\noindent With this in hand, we are ready to state and prove the key property of BRHT our algorithms will hinge upon.
\begin{lemma}[$(d,\tilded)$-compression]
\label{lemma:compression} 
Let $R$ be a $(d,\tilded)$-BRHT. Then, for every $\mean \in \mathbb{R}^d$ and every $t \ge 1$,
\[
\Pr\!\left[\left\|(R\mean)_{[1:t\tilded]} \right\|_2^2 \ge \frac1{100}\cdot\frac{t\tilded}{d}\cdot \|\mean\|_2^2\right] \ge \frac{8}{25},
\]
where the probability is over the choice of $R$.
\end{lemma}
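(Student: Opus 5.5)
The plan is to compute the first two moments of $Z := \|(R\mean)_{[1:t\tilded]}\|_2^2$ over the choice of the signs $\delta_1,\dots,\delta_b$, and then apply the Paley--Zygmund inequality (\cref{lemma:paley-zygmund}) with $\theta = 1/100$. Two easy cases come first. If $t\tilded \ge d$, then $Z = \|R\mean\|_2^2 = \|\mean\|_2^2$ by orthogonality, and the claim holds with probability $1$. Otherwise $t < b$; I will treat $t$ as an integer, and for non-integer $t$ replace it by $\lfloor t\rfloor$, which costs only a constant factor absorbed in the $1/100$.

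\textbf{Step 1 (structure of $R$).} By the recursive definition, $H_d = H_b \otimes H_\tilded$. I index a row by $r=(r_1,r_2)\in[b]\times[\tilded]$ and a column by $(j,c)\in[b]\times[\tilded]$, so that $H_d[(r_1,r_2),(j,c)] = (H_b)_{r_1 j}(H_\tilded)_{r_2 c}$. Column block $j$ is exactly the block on which $D$ acts by the single sign $\delta_j$. Write $\mean^{(j)}\in\mathbb{R}^\tilded$ for the $j$-th block of $\mean$ and set $\bm{v}_j := H_\tilded \mean^{(j)}$, so that $\|\bm{v}_j\|_2 = \|\mean^{(j)}\|_2$. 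Then
\[
(R\mean)_{(r_1,r_2)} = \sum_{j=1}^b \delta_j (H_b)_{r_1 j}\, (\bm{v}_j)_{r_2}.
\]
The first $t\tilded$ coordinates are exactly the rows with $r_1\in[t]$ and all $r_2\in[\tilded]$. Write $(H_b)_{r_1 j} = s_{r_1 j}/\sqrt b$ with $s_{r_1 j}\in\{\pm1\}$. This gives $Z = \bm\delta^\top A\bm\delta$ with
\[
A := \frac1b\sum_{r_1=1}^t S_{r_1} G S_{r_1}, \qquad G_{jk} := \langle \bm{v}_j,\bm{v}_k\rangle,\quad S_{r_1} := \operatorname{diag}(s_{r_1 1},\dots,s_{r_1 b}).
\]
Since $G$ is a Gram matrix, $A$ is positive semidefinite.

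\textbf{Step 2 (moments).} Pairwise independence already gives
\[
\mathbb{E}[Z] = \operatorname{tr}A = \frac tb\sum_j \|\bm{v}_j\|_2^2 = \frac{t\tilded}{d}\|\mean\|_2^2,
\]
which is exact. For the second moment, $4$-wise independence of the Rademacher signs gives the standard identity
\[
\mathbb{E}[(\bm\delta^\top A\bm\delta)^2] = (\operatorname{tr}A)^2 + 2\sum_{j\neq k}A_{jk}^2 \le (\operatorname{tr}A)^2 + 2\|A\|_F^2 .
\]
Because $A\succeq 0$, we have $\|A\|_F^2 = \sum_i\lambda_i^2 \le (\sum_i\lambda_i)^2 = (\operatorname{tr}A)^2$. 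Hence $\mathbb{E}[Z^2]\le 3\,\mathbb{E}[Z]^2$.

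\textbf{Step 3 (conclusion).} Paley--Zygmund with $\theta = 1/100$ yields
\[
\Pr\!\left[Z > \tfrac1{100}\cdot\tfrac{t\tilded}{d}\|\mean\|_2^2\right] \ge \frac{(0.99)^2}{3} > \frac{8}{25}.
\]
The main obstacle is Step 1: identifying the Kronecker structure of the Sylvester matrix so that the first $t\tilded$ rows decouple into $t$ rows of $H_b$ times all of $H_\tilded$. That decoupling is what makes $\mathbb{E}[Z]$ exactly $\frac{t\tilded}{d}\|\mean\|_2^2$ regardless of how $\mean$ is distributed across blocks. The second-moment bound is comfortable only because the constant $8/25$ leaves room for the crude factor $3$.
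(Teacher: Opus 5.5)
Your proof is correct and follows essentially the same route as the paper: the Kronecker factorization $H_d = H_b\otimes H_{\tilded}$, the exact first moment $\mathbb{E}[Z] = \frac{t\tilded}{d}\|\mean\|_2^2$ from pairwise independence, the bound $\mathbb{E}[Z^2]\le 3\,\mathbb{E}[Z]^2$ from $4$-wise independence, and Paley--Zygmund with $\theta = 1/100$. The only difference is bookkeeping: you write $Z=\bm\delta^\top A\bm\delta$ with $A\succeq 0$ and use $\|A\|_F^2\le(\operatorname{tr}A)^2$, whereas the paper bounds each cross term $\mathbb{E}[Z_{r_1}Z_{r_2}]\le 3\|\mean\|_2^4/b^2$ by Cauchy--Schwarz and sums. (Your aside that the case $t\tilded\ge d$ holds with probability $1$ is only right for $t\tilded\le 100d$, but the lemma implicitly assumes $t\le b$ anyway.)
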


\begin{proof}
Let $b := d/\tilded$ and $\rho := \|\mean\|_2.$ By~\cref{def:BRHT},
\[R = H_dD. \]
Define
\[
Z := \left\|(R\mean)_{[1:t\tilded]}\right\|_2^2.
\]
Partition
\[
\mean = (\mean_1,\dots,\mean_b), \qquad \mean_i \in \mathbb{R}^{\tilded}.
\]
Using the Kronecker decomposition
\[
H_d = H_b \otimes H_{\tilded},
\]
the $r$-th block of $R\mean$ satisfies
\[
(R\mean)_{[(r-1)\tilded+1:r\tilded]} = \sum_{i=1}^b (H_b)_{r,i}\delta_i H_{\tilded}\mean_i. \]
Define
\[
Z_r := \left\|(R\mean)_{[(r-1)\tilded+1:r\tilded]}\right\|_2^2,
\]
so that
\[
Z = \sum_{r=1}^t Z_r.
\]
We first compute $\mathbb{E}[Z]$. Expanding $Z_r$ gives
\[
\begin{aligned}
Z_r &= \left\|\sum_{i=1}^b(H_b)_{r,i}\delta_i H_{\tilded}\mean_i \right\|_2^2\\
&= \sum_{i_1,i_2 \in [b]}(H_b)_{r,i_1}(H_b)_{r,i_2} \delta_{i_1}\delta_{i_2} \left\langle H_{\tilded}\mean_{i_1}, H_{\tilded}\mean_{i_2} \right\rangle.
\end{aligned}
\]
Since $H_{\tilded}$ is orthogonal,
\[
\left\langle H_{\tilded}\mean_{i_1}, H_{\tilded}\mean_{i_2} \right\rangle = \langle \mean_{i_1},\mean_{i_2}\rangle.
\]
Therefore,
\[
\begin{aligned}
Z_r = \sum_{i=1}^b (H_b)_{r,i}^2 \|\mean_i\|_2^2 + \sum_{i_1\neq i_2} (H_b)_{r,i_1}(H_b)_{r,i_2} \delta_{i_1}\delta_{i_2}\langle \mean_{i_1},\mean_{i_2}\rangle.
\end{aligned}
\]
Since $(H_b)_{r,i}^2 = 1/b$,
\[
Z_r = \frac{\rho^2}{b} + \sum_{i_1\neq i_2} (H_b)_{r,i_1}(H_b)_{r,i_2} \delta_{i_1}\delta_{i_2} \langle \mean_{i_1},\mean_{i_2}\rangle.
\]
By pairwise independence and symmetry of the Rademacher variables,
\[
\mathbb{E}[\delta_{i_1}\delta_{i_2}] = 0 \qquad (i_1\neq i_2),
\]
and hence
\[
\mathbb{E}[Z_r] = \frac{\rho^2}{b}.
\]
Summing over $r \in [t]$,
\[
\mathbb{E}[Z] = \frac{t\rho^2}{b}.
\]
We now bound $\mathbb{E}[Z^2]$. For $r_1,r_2 \in [t]$,
\[
\begin{aligned}
\mathbb{E}[Z_{r_1}Z_{r_2}] = \frac{\rho^4}{b^2} + \sum_{\substack{i_1\neq i_2\\ j_1\neq j_2}} (H_b)_{r_1,i_1} (H_b)_{r_1,i_2} (H_b)_{r_2,j_1} (H_b)_{r_2,j_2} \cdot \mathbb{E}[\delta_{i_1}\delta_{i_2}\delta_{j_1}\delta_{j_2}] \langle \mean_{i_1},\mean_{i_2}\rangle \langle \mean_{j_1},\mean_{j_2}\rangle.
\end{aligned}
\]
By $4$-wise independence and symmetry of the Rademacher variables, only terms in which every $\delta_i$ appears with even multiplicity contribute to the expectation. Thus,
\[
\begin{aligned}
\mathbb{E}[Z_{r_1}Z_{r_2}] = \frac{\rho^4}{b^2} + 2
\sum_{i_1\neq i_2}(H_b)_{r_1,i_1}(H_b)_{r_1,i_2}(H_b)_{r_2,i_1}(H_b)_{r_2,i_2}\langle \mean_{i_1},\mean_{i_2}\rangle^2.
\end{aligned}
\]
Using
\[
(H_b)_{r,i} = \pm \frac1{\sqrt b}
\]
and Cauchy--Schwarz,
\[
\begin{aligned}
\mathbb{E}[Z_{r_1}Z_{r_2}] &\le \frac{\rho^4}{b^2} + \frac{2}{b^2}\sum_{i_1\neq i_2} \|\mean_{i_1}\|_2^2 \|\mean_{i_2}\|_2^2\\ 
&\le \frac{3\rho^4}{b^2},
\end{aligned}
\]
where we used
\[
\sum_{i=1}^b\|\mean_i\|_2^2 = \rho^2.
\]
Summing over $r_1,r_2 \in [t]$ yields
\[
\mathbb{E}[Z^2] \le \frac{3t^2\rho^4}{b^2}.
\]
Applying the Paley--Zygmund (\cref{lemma:paley-zygmund}),
\[
\Pr\!\left[Z > \theta \mathbb{E}[Z]\right] \ge (1-\theta)^2 \frac{\mathbb{E}[Z]^2}{\mathbb{E}[Z^2]} \ge \frac{(1-\theta)^2}{3}.
\]
Choosing $\theta = 1/100$ gives
\[
\Pr\!\left[Z > \frac{t\rho^2}{100b}\right] \ge \frac{8}{25}.
\]
Since
\[
b = \frac{d}{\tilded},
\]
the claim follows.
\end{proof}

\noindent To conclude, we analyze the amount of randomness required to sample a BRHT.
\begin{remark} \label{brht_randomness}
    A $(d, \tilded)$-BRHT requires at most $4\log(d/\tilded)$ bits of randomness as $(d/\tilded)$ $4$-wise Rademacher random variables can be generated using a standard construction involving a degree-$3$ polynomial over an appropriate field (see, e.g.,~\cite[Corollary 3.34]{Vadhan12}).
\end{remark}

\begin{remark}[Computational complexity] For $\bm{v} \in \mathbb{R}^d$, $H_d\bm{v}$ can be computed in $O(d\log d)$ time via the Fast Walsh--Hadamard transform. This computation dominates the running time of our protocols.
\end{remark}

\section{Limited randomness} 
In this section, we prove \cref{sdgmt}, by providing and analyzing an explicit protocol for the problem. This protocol is quite simple: it first uses the shared randomness to jointly sample a BRHT $R$ with suitable parameters; each user then applies $R$ to their samples to obtain an instance of Gaussian mean testing in a lower-dimensional space. Having used all the shared randomness available, they then run the private-coin protocol of~\cref{private_coin} on this lower-dimensional problem.

\begin{theorem}[\cref{sdgmt}, restated]
For $\ell \in [d]$ and $s\ge 0$, there exists an $s$-public coin protocol for distributed Gaussian mean testing under $\ell$-bit communication constraint with 
    \[O\left(\frac{d}{\ell\eps^2}\cdot\sqrt{\max\left(\frac{d}{2^{ \Theta(s) }} , \ell\right)}\right)
    \]
    users, each holding one sample.
\end{theorem}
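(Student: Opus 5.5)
The protocol has three stages: compress the dimension with a shared BRHT, run the private-coin protocol of \cref{private_coin} on the compressed problem, and repeat a constant number of times to boost the success probability. Set $\tilded := \max(\ell, d/2^{s/4})$, rounded to a power of two that divides $d$. By \cref{brht_randomness}, a $(d,\tilded)$-BRHT $R$ then costs at most $4\log(d/\tilded)\le s$ shared bits. If $s$ is large enough that $\tilded=\ell$, the remaining shared bits are simply ignored. Every user computes $R\bm{X}^{(k)}$. Since $R$ is orthogonal, $R\bm{X}^{(k)}\sim\mathcal{G}(R\mean,\mathbb{I}_d)$, and its first $\tilded$ coordinates $\bm{Z}^{(k)}:=(R\bm{X}^{(k)})_{[1:\tilded]}$ are distributed as $\mathcal{G}((R\mean)_{[1:\tilded]},\mathbb{I}_{\tilded})$.

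Next, we check that compression preserves the testing problem. Under $\mathcal{H}_0$, the compressed mean is $\bm 0$ whatever $R$ is. Under $\mathcal{H}_1$, \cref{lemma:compression} with $t=1$ says that with probability at least $8/25$ over $R$,
\[
\|(R\mean)_{[1:\tilded]}\|_2 \ge \frac{1}{10}\sqrt{\tilded/d}\,\eps =: \eps'.
\]
Conditioned on this event, the users face Gaussian mean testing in dimension $\tilded\ge\ell$ with distance parameter $\eps'$. For this we run the private-coin protocol of \cref{private_coin}, which uses only private randomness. It needs
\[
O\!\left(\frac{\tilded^{3/2}}{\ell\eps'^2}\right)=O\!\left(\frac{\tilded^{3/2}d}{\ell\tilded\eps^2}\right)=O\!\left(\frac{d\sqrt{\tilded}}{\ell\eps^2}\right)
\]
users. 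Substituting $\tilded=\max(d/2^{\Theta(s)},\ell)$ gives exactly the claimed bound. One caveat: \cref{private_coin} needs $\eps'<1$, which holds because $\tilded\le d$ and $\eps\le1$.

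The main obstacle is the error probability. Under $\mathcal{H}_1$ the protocol succeeds only when the good event for $R$ occurs, which has probability $8/25$, not close to $1$. To fix this, we split the users into $N=O(1)$ disjoint groups. Each group draws its own independent BRHT, which costs $N\cdot 4\log(d/\tilded)$ bits; this is absorbed into the $\Theta(s)$ in the exponent by choosing $\tilded=\max(\ell,d/2^{s/(4N)})$. Within each group, the private-coin test is amplified so that its error is at most $\delta_0$ whenever the compression event holds, at a constant-factor cost in users. The referee rejects if at least one group rejects.

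We then check both hypotheses. Under $\mathcal{H}_0$, a union bound over the groups bounds the false-rejection probability by $N\delta_0$. Under $\mathcal{H}_1$, the probability that no group sees a good compression is at most $(17/25)^N$, and a group that does see one fails with probability at most $\delta_0$. Taking $N$ large enough that $(17/25)^N\le1/20$, and then $\delta_0\le 1/(20N)$, makes both errors at most $1/10$. All of this changes the number of users only by a constant factor, which completes the plan.
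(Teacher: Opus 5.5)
Your proposal is correct and follows essentially the paper's proof: sample a shared BRHT, truncate to $\tilded=\max(d/2^{\Theta(s)},\ell)$ coordinates, use \cref{lemma:compression} to retain $\|\tilde{\mean}\|_2^2\gtrsim(\tilded/d)\eps^2$ with probability $8/25$, run the private-coin protocol of \cref{private_coin}, and boost by a constant number of independent repetitions with fresh BRHTs, rejecting if any rejects. The paper uses $7$ repetitions with $s/7$ bits each and relies directly on the $99/100$ guarantee of the private-coin test. Your version differs only cosmetically: you use block size $\tilded$ with $t=1$ instead of block size $d_s$ with $t=\tilded/d_s$, and you amplify the inner test to $\delta_0$ before combining groups.
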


\begin{proof}
The users use $s/7$ bits of shared randomness to sample a $(d,d_s)$-BRHT $R$, where
\[
d_s := d/2^{\lfloor s/28\rfloor}\,.
\]
    This is possible by~\cref{brht_randomness}, as sampling $R$ requires at most $4\log(d/d_s)$ random bits. (We will explain the reason for the factor $7$ in $s/7$ momentarily.)

    Let $\mean \in \mathbb{R}^d$ be an unknown mean vector and for all $k\in[n],\bm{X}^{(k)} \sim \mathcal{G}(\mean,\mathbb{I}_d)$ be the sample observed by user $k$. Define $\tilded:= d_s\vee\ell$ and $\tilde{\mean} = (R\mean)_{[1:\tilded]}$. Each user applies $R$ to its sample to get $\tilde{\bm{X}}^{(k)} := (R\bm{X}^{(k)})_{[1: \tilded]}$, so that $\tilde{\bm{X}}^{(k)}\sim\mathcal{G}\left(\tilde{\mean},\mathbb{I}_{\tilded}\right)$. By~\cref{lemma:compression}, with probability at least $8/25$,
    \[
        \left\|\tilde{\mean}\right\|_2^2 \ge \frac{1}{100} \cdot \frac {\tilded}{d}\cdot \|\mean\|^2_2
    \]
    Thus, if $\mean = \bm{0}$, $\tilde{\mean} = \bm{0}$ and otherwise $\|\mean\|_2 \ge \eps$ and with probability at least $8/25$, $\|\tilde{\mean}\|_2^2\ge \frac{\tilded\eps^2}{100d}$. The users and referee then use the {private-coin} protocol of~\cref{private_coin} with $\tilde{\bm{X}}^{(k)}$ to distinguish between the two cases with
    \[
        O\left(\frac{(\tilded)^{3/2}}{\ell\left(\frac{\tilded}{d}\eps^2\right)}\right) = O\left( \frac{d\sqrt{\tilded}}{\ell\eps^2} \right)
    \]
    users, (almost) proving the theorem. It remains to amplify the success probability.

    Thus, with $O\left( \frac{d\sqrt{\tilded}}{\ell\eps^2} \right)$ users, if $\mean = \bm{0}$, the referee accepts with probability at least $99/100$ (which is the failure probability from \cref{private_coin}). But when $\|\mean\|_2 \ge \eps$, all we can say is that, by a union bound, the referee rejects with probability at least $1 - (\frac{17}{25} + \frac{8}{25}\cdot\frac{1}{100}) > 3/10$, as now the failure probability comes both from the choice of $R$ and that of \cref{private_coin}. Repeating the protocol $7$ independent times, each time with new $R$, and accepting only if all repetitions output \textsf{accept} reduces the error probability below $1/10$. This increases the number of users only by a constant factor. Since $\tilded = \max(d/2^{\lfloor s/28 \rfloor}, \ell)$, the theorem follows.
\end{proof}
    With $s = O(\log(d/\ell))$, \cref{sdgmt} yields sample complexity $O(d/\sqrt{\ell}\eps^2)$, which matches the public-coin sample complexity in \cite{AcharyaCT20} when users must share an infinite amount of randomness.
    \begin{corollary}[Randomness-efficient optimal public-coin protocol] 
        There exists a protocol for distributed Gaussian mean testing under $\ell$-bit communication constraints with $O({d}/(\sqrt{\ell}\eps^2))$ users, each holding one sample, and sharing $O(\log(d/\ell))$ bits of randomness.
    \end{corollary}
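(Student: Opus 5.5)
This corollary should be a direct specialization of \cref{sdgmt}; the plan is to pick $s$ so that the maximum inside the square root equals $\ell$. In the proof of \cref{sdgmt}, the users spend $s/7$ bits on each of the $7$ repetitions to sample a $(d,d_s)$-BRHT, where $d_s = d/2^{\lfloor s/28\rfloor}$. They then work in dimension $\tilded = d_s \vee \ell$, and the resulting number of users is $O\bigl(d\sqrt{\tilded}/(\ell\eps^2)\bigr)$.

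I will take $s := 28\lceil \log_2(d/\ell)\rceil$, which is $O(\log(d/\ell))$. By the simplification convention, $d$ and $\ell$ are powers of two, so $\log_2(d/\ell)$ is a nonnegative integer. Hence $\lfloor s/28\rfloor = \log_2(d/\ell)$ and $d_s = \ell$ exactly, so $\tilded = \ell$. Each repetition uses at most $4\log(d/d_s) = 4\log(d/\ell)$ bits by \cref{brht_randomness}, and this fits within the $s/7 = 4\log_2(d/\ell)$ bits allotted. The total shared randomness is therefore $s = O(\log(d/\ell))$.

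Substituting $\tilded = \ell$ into the bound from \cref{sdgmt} gives
\[
O\!\left(\frac{d\sqrt{\ell}}{\ell\eps^2}\right) = O\!\left(\frac{d}{\sqrt{\ell}\,\eps^2}\right)
\]
users, each holding a single sample and sending $\ell$ bits, with error at most $1/10$ after the $7$-fold repetition. This is the claimed bound.

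Two edge cases need a check. When $\ell = d$, we get $s = 0$ and the private-coin bound of \cref{private_coin} already gives $O(\sqrt{d}/\eps^2) = O(d/(\sqrt{\ell}\eps^2))$. The other point is that the private-coin subroutine is run in dimension $\tilded = \ell$ with $\ell$ bits per user, so we need $\ell \le \tilded$, which holds. I do not expect any real obstacle beyond this bookkeeping: the only delicate point is matching the $\Theta(s)$ in the exponent to the concrete constant $28$ used in the proof, which the choice of $s$ above handles.
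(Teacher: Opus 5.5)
Your proposal is correct and matches the paper's argument: the paper obtains the corollary by plugging $s = O(\log(d/\ell))$ into \cref{sdgmt}, so that $d/2^{\Theta(s)} \le \ell$ and the bound becomes $O(d/(\sqrt{\ell}\eps^2))$. Your explicit choice $s = 28\lceil\log_2(d/\ell)\rceil$, which gives $d_s = \ell$ and hence working dimension $d_s \vee \ell = \ell$, together with the check of the $\ell = d$ edge case, simply spells out that specialization with the concrete constants from the proof.
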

    
\section{Heterogeneous samples}
In this section, we prove~\cref{msampledgmt}. The core idea, which we refer to as \emph{Signal Aggregation}, is to generalize the ``naive strategy'' previously discussed (see~\cref{ft:aggregate}) where all users have the same number $m$ of samples and aggregate them into a single ``aggregated sample'':
\[
\bm{X}^{(k,1)},\bm{X}^{(k,2)},\cdots, \bm{X}^{(k,m_k)}\sim\mathcal{G}(\mean,\mathbb{I}_d) \quad\leadsto\quad \tilde{\bm{X}}^{(k)} := \frac{1}{\sqrt{m}}\sum_{i=1}^m\bm{X}^{(k,i)} \sim \mathcal{G}(\sqrt{m}\mean,\mathbb{I}_d)
\]
Doing so effectively amplifies the signal, which means that one can then use a protocol designed for \emph{one} sample per user, but with better distance parameter $\eps' := \sqrt{m}\eps$.

In the heterogeneous sample case, one can attempt to do the same thing, but now (since user $k$ has $m_k$ samples, while user $k'$ has a possible different number of samples $m_{k'}$) the resulting aggregated samples no longer have the same distribution in the \textsf{no}-case (when $\mean\neq \bm{0}$). Dealing with the fact that samples are no longer i.i.d.\ is the key technical difficulty in proving~\cref{msampledgmt}.
\begin{theorem}[\cref{msampledgmt}, restated] \label{msampledgmt:restated}
    For $\ell \in [r,rd]$, $s= \Omega(\log(d/\ell))$, and $\bm{m} = (m_1,\dots, m_n) \in \mathbb{Z}_+^n$ satisfying $m_k \in [r,r\eps^{-2}]$ for all $k \in [n]$, there exists a protocol (~\cref{algorithm:mdgmt_alg}) for distributed Gaussian mean testing under $\ell$-bit communication constraint with $n$ users provided that 
    \[
    \frac{\sum_{1\leq k_1 \neq k_2 \leq n} \sqrt{m_{k_1}{m_{k_2}}}}{n}  \geq C\cdot\frac{d}{\sqrt\ell\eps^2} 
    \]
    where $C,r>0$ are absolute constants.
\end{theorem}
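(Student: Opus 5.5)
The protocol combines signal aggregation with the public-coin compression of the previous section, and then runs a weighted version of \cref{uniformity_testing}. Using $s=\Omega(\log(d/\ell))$ shared bits, the users jointly sample a $(d,\ell')$-BRHT $R$ with $\ell'=\Theta(\ell)$, a power of two with $\ell'\le \ell$.

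Each user $k$ first aggregates its samples into $\tilde{\bm{X}}^{(k)}:=\frac{1}{\sqrt{m_k}}\sum_{i=1}^{m_k}\bm{X}^{(k,i)}\sim\mathcal{G}(\sqrt{m_k}\,\mean,\mathbb{I}_d)$. It then computes $\bm{Z}^{(k)}:=(R\tilde{\bm{X}}^{(k)})_{[1:\ell']}\sim\mathcal{G}(\sqrt{m_k}\,\tilde{\mean},\mathbb{I}_{\ell'})$, where $\tilde{\mean}:=(R\mean)_{[1:\ell']}$. Finally it sends the $\ell'$ sign bits $\bm{Y}^{(k)}:=\mathbf{1}(\bm{Z}^{(k)}>0)$.

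By \cref{lemma:compression} with $t=1$, with probability at least $8/25$ over $R$ we have $\|\tilde{\mean}\|_2^2\ge \frac{\ell'}{100d}\eps^2$. Under the null, the $\bm{Y}^{(k)}$ are i.i.d.\ uniform on $\{0,1\}^{\ell'}$. Under the alternative, they are independent binary product distributions with heterogeneous means $\bm{p}^{(k)}$, where $p^{(k)}_j-\tfrac12=g(\sqrt{m_k}\tilde\mu_j)$ and $g(x):=\Phi(x)-\tfrac12$.

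The referee computes the weighted U-statistic
\[
T:=\frac{1}{W}\sum_{k_1\neq k_2}\sqrt{m_{k_1}m_{k_2}}\sum_{j=1}^{\ell'}\Bigl(Y^{(k_1)}_j-\tfrac12\Bigr)\Bigl(Y^{(k_2)}_j-\tfrac12\Bigr),\qquad W:=\sum_{k_1\neq k_2}\sqrt{m_{k_1}m_{k_2}}.
\]
The weights $\sqrt{m_{k_1}m_{k_2}}$ are the natural choice because the signal in user $k$'s bits scales like $\sqrt{m_k}$.

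\textbf{Mean.} We need a lower bound on $\mathbb{E}[T]$ under the alternative. Because $m_k\le r\eps^{-2}$ and $|\tilde\mu_j|\le\|\mean\|_2$, the argument $\sqrt{m_k}\tilde\mu_j$ is at most a constant, at least once we reduce to $\|\mean\|_2=\Theta(\eps)$. This reduction is legitimate: one can shrink along a direction, or simply note that the test statistic's expectation is monotone in that regime. In this linear regime $g(\sqrt{m_k}x)\ge c\sqrt{m_k}\,x$ with the same sign as $x$. Hence
\[
\mathbb{E}[T]\ge \frac{c^2}{W}\sum_{k_1\neq k_2}m_{k_1}m_{k_2}\|\tilde{\mean}\|_2^2.
\]
Every pair has $\sqrt{m_{k_1}m_{k_2}}\ge r$, so $\mathbb{E}[T]\gtrsim r\cdot\frac{\ell'\eps^2}{d}$. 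Under the null, $\mathbb{E}[T]=0$. The threshold is therefore set at $\tau:=\Theta(\ell'\eps^2/d)$.

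\textbf{Variance.} Under the null, the terms for distinct unordered pairs are uncorrelated, so
\[
\mathrm{Var}[T]\asymp \frac{\ell'}{W^2}\sum_{k_1\neq k_2}m_{k_1}m_{k_2}.
\]
The constraint $m_k\in[r,r\eps^{-2}]$ gives the comparison $\sum_{k_1\neq k_2} m_{k_1}m_{k_2}\lesssim \eps^{-2}\,W\cdot\max_k\sqrt{m_k}$. This bound is too lossy, so instead I would choose normalized weights and compare with $\tau^2$ directly.

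Requiring $\mathrm{Var}\ll\tau^2$ leads to a condition of the form $W^2/\sum m_{k_1}m_{k_2}\gtrsim d^2/(\ell\eps^4)$. I expect to convert this into the stated condition $W/n\gtrsim d/(\sqrt{\ell}\eps^2)$ via the inequality $\sum m_{k_1}m_{k_2}\le(\max_k m_k)\sum\sqrt{m_{k_1}m_{k_2}}$-type bounds. If that conversion fails, the fallback is to use unweighted bits with weights $1$. In that case $\mathbb{E}[T]\gtrsim \frac{1}{n^2}\sum_{k_1\neq k_2}\sqrt{m_{k_1}m_{k_2}}\,\|\tilde\mean\|^2$ and $\mathrm{Var}[T]\asymp \ell'/n^2$. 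Chebyshev then requires
\[
\frac{W}{n}\cdot\frac{\ell\eps^2}{d}\gtrsim\sqrt{\ell},
\]
which is exactly the hypothesis, so this unweighted version is likely the intended one.

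\textbf{Alternative variance.} The cross terms contributed by the signal, of order $\frac{1}{n^4}\sum_{k}\bigl(\sum_{k'}\sqrt{m_km_{k'}}\bigr)^2\|\tilde\mean\|^2$, must be dominated by $\mathbb{E}[T]^2$. This domination uses the hypothesis again together with $m_k\ge r$.

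The last step is amplification: repeat the protocol $O(1)$ times with fresh $R$, as in the proof of \cref{sdgmt}, to push the error below $1/10$.

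I expect the main obstacle to be the alternative-case analysis with non-identical summands. The difficulty is controlling the nonlinearity of $g$ uniformly across the different $m_k$ (which is where $m_k\le r\eps^{-2}$ enters) and bounding the covariance cross terms in $\mathrm{Var}[T]$ without an i.i.d.\ structure.
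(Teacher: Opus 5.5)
Your working version is the unweighted fallback, and that is the paper's protocol: aggregate, compress with a BRHT, send sign bits, run the U-statistic of \cref{uniformity_testing} with threshold of order $\frac{W}{n(n-1)}\cdot\frac{\ell\eps^2}{d}$, and apply Chebyshev. Your weighted attempt fails, as you suspected, because $\tau=\Theta(\ell'\eps^2/d)$ discards the factor $\sum_{k_1\neq k_2}m_{k_1}m_{k_2}/W$ from the mean. For $m_k\equiv m$ your condition becomes $n\gtrsim d/(\sqrt{\ell}\eps^2)$, with no gain from $m$. Your monotonicity reduction to $\|\mean\|_2=\eps$ is sound, and it handles $\operatorname{Erfc}$ saturation more carefully than the paper does.

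The genuine gap is the alternative-case variance, which you flag as the main obstacle but do not resolve. Your expression for the signal cross term carries a spurious factor $m_k$. For the unweighted statistic that term is at most of order $\frac{1}{n^3}\bigl(\sum_k\sqrt{m_k}\bigr)^2\|\tilde{\mean}\|_2^2$. With your expression, domination by $\mathbb{E}[T]^2$ does not follow from the hypothesis. For $m_k\equiv r\eps^{-2}$ it would need $n\gg d/(\ell\eps^2)$, while the hypothesis only gives $n\gtrsim d/(r\sqrt{\ell})$.

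The missing idea is \cref{lemma:testing_lemma}. Every $\tilde p_i^{(k)}$ has the sign of $\tilde\mu_i$, so the negative fourth-order terms in the exact variance can be dropped. This gives $\operatorname{Var}(T_i)\le\frac{n(n-1)}{8}+(n-2)\mathbb{E}[T_i]$. The signal part of $\operatorname{Var}(T)$ is then at most $\mathbb{E}[T]/n$. It contributes $O\bigl(nd/(W\ell\eps^2)\bigr)=O\bigl(1/(C\sqrt{\ell})\bigr)$ to the Chebyshev bound. Since that bound decreases in $\mathbb{E}[T]$, this is also what lets your reduction to $\|\mean\|_2=\eps$ cover the whole alternative-case analysis, not only the mean.

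Second, amplification ``as in the proof of \cref{sdgmt}'' is not available as stated. There it was paid for with a constant factor more users. Here $n$ and $\bm{m}$ are fixed, and each user has only $\ell$ bits. You need to fit independent BRHTs $R_1,\dots,R_7$ inside each user's budget, sending $\ell/7$ sign bits for each. As the paper does, you would also aggregate a separate block of $\lfloor m_k/7\rfloor$ samples per repetition and compute the threshold from the $\lfloor m_k/7\rfloor$. This is exactly what the assumptions $\ell\ge r$ and $m_k\ge r$ are for; you used $m_k\ge r$ only for a weight lower bound that your final argument does not need. Reusing one aggregated sample across repetitions can also work, since the $R_t$ are independent of the data, but that requires its own argument.
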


\begin{proof}
We first describe the protocol in~\cref{algorithm:mdgmt_alg} and then analyze its correctness.

\begin{algorithm}
\caption{$\ell$-bit, $\bm{m}$-sample distributed Gaussian mean testing}
\label{algorithm:mdgmt_alg}

\vspace{0.3em}

User $k$, for every $k \in [n]$:
\begin{algorithmic}[1]
\Require
$\forall j\in[m_k]$,
$\bm{X}^{(k,j)}
\sim
\mathcal{G}(\mean,\mathbb{I}_d)$;
communication budget $\ell \in[7,7d]$;
$\forall t\in[7]$,
$R_t$: $(d,7d/\ell)$-BRHT

\For{$t \in [7]$}
    \State Samples $\bm{X}^{(k,t)}\gets\frac1{\sqrt{\lfloor m_k/7 \rfloor}}\sum_{j=(t-1)\lfloor m_k/7\rfloor+1}^{t \lfloor m_k/7 \rfloor}\bm{X}^{(k,j)}$
    
    \State $\tilde{\bm{X}}^{(k,t)} \gets R_t\bm{X}^{(k,t)}$
    
    \State $\bm{y}^{(k,t)} \gets \mathbf{1}\left(\tilde{\bm{X}}^{(k,t)}_{[1: \lfloor \ell/7 \rfloor]} > 0 \right)$ 
\EndFor

\State
\texttt{Send} $\bm{Y}^{(k)}=(\bm{y}^{(k,1)},\dots,\bm{y}^{(k,7)})$
\end{algorithmic}

\vspace{1em}

Referee:
\begin{algorithmic}[1]
\Require
Messages $\forall k\in[n]$,
$\bm{Y}^{(k)}\in\{0,1\}^{\ell}$, threshold $\eps$

\State $N \gets  \sum_{k_1\neq k_2} \sqrt{\lfloor m_{k_1}/7\rfloor \lfloor m_{k_2}/7\rfloor}$

\For{$t\in[7]$}
    \State Run~\cref{uniformity_testing} on
    samples $\bm{y}^{(1,t)},\dots,\bm{y}^{(n,t)}$ with  \\ \qquad threshold $\eps' =\frac{\eps}{80}\sqrt{\frac{\ell N}{7dn(n-1)}}$
    \Comment{$\tau \gets (\eps')^2/2$ }
    
    \If{output is \textsf{reject}}
        \State \Return \textsf{reject}
    \EndIf
\EndFor

\State \Return \textsf{accept}
\end{algorithmic}
\end{algorithm}

The protocol performs $7$ independent repetitions for probability amplification. We analyze a single repetition and omit the superscript $(t)$ throughout. For simplicity, we also ignore flooring operations, as they affect the bounds only up to constant factors. Fix one repetition, and let $R$ denote the corresponding $(d,7d/\ell)$-BRHT. Define
\[
\tilde{\mean} := (R\mean)_{[1:\ell/7]}.
\]
For every user $k$,
\[
\bm{X}^{(k)} = \frac1{\sqrt{m_k/7}}\sum_{j=1}^{m_k/7} \bm{X}^{(k,j)} \sim \mathcal{G} \!\left(\sqrt{m_k/7}\,\mean, \mathbb{I}_d \right).
\]
Since $R$ is orthogonal,
\[
\tilde{\bm{X}}^{(k)} := (R\bm{X}^{(k)})_{[1:\ell/7]} \sim \mathcal{G}\!\left(\sqrt{m_k/7}\,\tilde{\mean}, \mathbb{I}_{\ell/7}\right).
\]
Define
\[
\bm{y}^{(k)} = \mathbf{1} (\tilde{\bm{X}}^{(k)} > 0) \in \{0,1\}^{\ell/7}.
\]
The vectors $\bm{y}^{(1)},\dots,\bm{y}^{(n)}$ are independent, though generally not identically distributed. For every coordinate $i\in[\ell/7]$, define
\[
p_i^{(k)} := \Pr[y_i^{(k)} = 1], \qquad \tilde p_i^{(k)} := p_i^{(k)} - \frac12.
\]
The referee computes
\[
T_i := \sum_{k_1\neq k_2}(y_i^{(k_1)}-1/2)(y_i^{(k_2)}-1/2).
\]
Since for all $k_1 \neq k_2$, $\tilde p_i^{(k_1)}\tilde p_i^{(k_2)} \ge 0$, applying~\cref{lemma:testing_lemma} coordinate-wise yields
\[
\mathbb{E}[T_i] = \sum_{k_1\neq k_2} \tilde p_i^{(k_1)} \tilde p_i^{(k_2)}, 
\]
and
\[
\operatorname{Var}(T_i) \le \frac{n(n-1)}8 + (n-2)\mathbb{E}[T_i].
\]
Define
\[
T := \frac1{n(n-1)} \sum_{i=1}^{\ell/7} T_i.
\]
Then
\[
\begin{aligned}
\mathbb{E}[T] &= \frac1{n(n-1)} \sum_{i=1}^{\ell/7} \sum_{k_1\neq k_2} \tilde p_i^{(k_1)} \tilde p_i^{(k_2)},\\
\operatorname{Var}(T)
&=\frac1{n^2(n-1)^2} \sum_{i=1}^{\ell/7} \operatorname{Var}(T_i)\\
&\le\frac{\ell}{56\,n(n-1)} + \frac{(n-2)\mathbb{E}[T]}{n(n-1)}.
\end{aligned}
\]
We now analyze the two cases:

\paragraph{Case 1: $\mean=\bm{0}$.} In this case, $p_i^{(k)} = 1/2$ for all $i,k,$ and hence $\tilde p_i^{(k)} = 0.$ Therefore,
\[
\mathbb{E}[T]=0.
\]
Applying Chebyshev's inequality,
\[
\Pr[T\ge\tau] \le \frac{\operatorname{Var}(T)}{\tau^2} \le \frac{\ell} {56\,n(n-1)\tau^2}.
\]
Using the threshold
\[
\tau = \frac12 \cdot\frac1{6400} \cdot \frac{N}{n(n-1)} \cdot \frac{\ell\eps^2}{7d},
\]
we obtain
\[
\Pr[T\ge\tau] = O\!\left(\frac{d^2 n(n-1)}{N^2\ell\eps^4}\right) < \frac1{100},
\]
whenever
\[
\frac{N}{n} > C_1 \cdot \frac{d}{\sqrt{\ell}\eps^2}
\]
for a sufficiently large absolute constant $C_1>0$.

\paragraph{Case 2: $\|\mean\|_2\ge\eps$.} Using standard properties of Erfc,
\[
|\tilde p_i^{(k)}| \ge \frac18\sqrt{\frac{m_k}{7}}\,|\tilde\mu_i|,
\]
Therefore,
\[
\begin{aligned}
\mathbb{E}[T] &= \frac1{n(n-1)} \sum_{i=1}^{\ell/7}\sum_{k_1\neq k_2}\tilde p_i^{(k_1)} \tilde p_i^{(k_2)}
\\
&\ge
\frac1{64\,n(n-1)}\sum_{i=1}^{\ell/7}\sum_{k_1\neq k_2}\sqrt{\frac{m_{k_1}}7}\sqrt{\frac{m_{k_2}}7}\,\tilde\mu_i^2\\
&=
\frac{N}{64\,n(n-1)} \|\tilde{\mean}\|_2^2.
\end{aligned}
\]
By~\cref{lemma:compression}, with probability at least $8/25$,
\[
\|\tilde{\mean}\|_2^2 \ge \frac1{100} \cdot \frac{\ell}{7d} \cdot \eps^2.
\]
Hence, with probability at least $8/25$,
\[
\mathbb{E}[T] \ge \frac1{6400} \cdot \frac{N}{n(n-1)} \cdot \frac{\ell\eps^2}{7d}.
\]
Conditioned on this event, applying Chebyshev's inequality,
\[
\begin{aligned}
\Pr[T\le\tau] &= \Pr\!\left(\mathbb{E}[T]-T \ge \frac{\mathbb{E}[T]}2\right)\\
&\le \frac{4\operatorname{Var}(T)}{\mathbb{E}[T]^2}\\ 
&= O\!\left(\frac{d^2 n^2}{N^2\ell\eps^4} + \frac{nd}{N\ell\eps^2}\right).
\end{aligned}
\]
Thus,
\[
\Pr[T\le\tau] < \frac1{100},
\]
provided
\[
\frac{N}{n} > C_2 \cdot \frac{d}{\sqrt{\ell}\eps^2}
\]
for a sufficiently large absolute constant $C_2>0$.

Therefore, a single repetition accepts with probability at least $99/100$, if $\|\mean\|_2 = 0$ and rejects with probability at least $1 - \left(\frac{17}{25} + \frac{8}{25}\cdot\frac1{100} \right) > \frac3{10}$, if $\|\mean\|_2\ge \eps$. Repeating independently $7$ times and accepting only if all repetitions output \textsf{accept} reduces the error probability below $1/10$. Finally, by~\cref{brht_randomness}, the $7$ independent BRHTs require only $O(\log(d/\ell))$ shared random bits.
\end{proof}

\begin{remark}
Note that for a fixed sampling budget $m = \sum_{k \in [n]}m_k$, $\sum_{k_1\neq k_2 \in[n]}\sqrt{m_{k_1}}{\sqrt{m_{k_2}}}$ is maximized when $m_{k_1}=m_{k_2}=m/n$ for all $k_1,k_2 \in [n]$. Then, we need $m = O(d/\sqrt{\ell}\eps^2)$.    
\end{remark}

\section{Heterogeneous communication}
Having addressed limited (public) randomness and heterogeneous sample sizes in the previous two sections, we now turn to the last generalization considered in this work, heterogeneous communication budgets. Specifically, we will establish~\cref{combudgetsdgmt}, restated below:

\begin{theorem}[\cref{combudgetsdgmt}, restated] \label{combudgetsdgmt:restated}
    For $\bm{\ell} = (\ell_1,\dots, \ell_n) \in \mathbb{Z}_+^n$ and $s= \Omega(\log(d/\|\bm{\ell}\|_\infty))$, there exists a protocol for distributed Gaussian mean testing under $\bm{\ell}$-bit communication constraint with $n$ users, where user $k$ has one sample and can send $\ell_k$ bits to the referee, provided that
    \[
    \frac{\|\bm{\ell}\|_1}{\sqrt{\|\bm{\ell}\|_\infty}}  \geq C\cdot\frac{d}{\eps^2} 
    \]
    where $C>0$ is an absolute constant.
\end{theorem}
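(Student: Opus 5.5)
We would reduce heterogeneous communication to the homogeneous public-coin protocol by grouping users into ``virtual users'' that each have a common budget $L := \|\bm{\ell}\|_\infty$, used with one shared $(d,d/L)$-BRHT per repetition. The difficulty is that each user holds only one sample. Users in a group therefore cannot pool samples; they can only split the coordinates they report. This is why we would use the ``simulate and infer'' idea of \cref{private_coin} in reverse.

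\textbf{Protocol.} All users share, for each of $7$ repetitions, a $(d,7d/L)$-BRHT $R_t$; by \cref{brht_randomness} this costs $O(\log(d/L))$ bits. We greedily partition $[n]$ into consecutive groups $P_1,\dots,P_K$ such that each group has total budget $\sum_{k\in P_j}\ell_k\in[L,2L)$, using the fact that $\ell_k\le L$. This gives $K\ge \|\bm{\ell}\|_1/(2L)-1$, and we may discard a final deficient group. Within group $P_j$, the users are assigned disjoint blocks of the first $L/7$ rotated coordinates $[1:L/7]$ per repetition, with user $k$ receiving $\lfloor \ell_k/7\rfloor$ coordinates, and these blocks cover $[1:L/7]$. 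Each user sends the signs $\mathbf{1}((R_t\bm{X}^{(k)})_i>0)$ of its assigned coordinates. The referee then reconstructs, for each group $j$, a vector $\bm{y}^{(j,t)}\in\{0,1\}^{L/7}$. Its coordinates come from different users with independent samples, so they are independent with marginals $p_i=\Pr[(R_t\bm{X})_i>0]$. Hence $\bm{y}^{(1,t)},\dots,\bm{y}^{(K,t)}$ are $K$ i.i.d.\ samples from the binary product distribution with mean $\tfrac12\operatorname{Erfc}(-\tilde{\mean}/\sqrt2)$, where $\tilde{\mean}=(R_t\mean)_{[1:L/7]}$. Rounding losses from the floors only cost constants and can be absorbed by taking group budgets $\ge cL$.

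\textbf{Analysis.} This is now exactly the setting of the proof of \cref{msampledgmt:restated} with all $m_k$ constant, so we would reuse that argument verbatim. In the null case we have $\tilde{\mean}=\bm{0}$, and \cref{uniformity_testing}, applied through \cref{lemma:testing_lemma} and Chebyshev, accepts with probability $99/100$. In the alternative case, \cref{lemma:compression} gives $\|\tilde{\mean}\|_2^2\ge \frac{L\eps^2}{700d}$ with probability $8/25$. \cref{lemma:reduction_to_bpmt} turns this into an $\ell_2$ distance $\Omega(\sqrt{L/d}\,\eps)$ in dimension $L/7$. Binary product mean testing then succeeds with $K=O\big(\sqrt{L}\cdot d/(L\eps^2)\big)=O(d/(\sqrt{L}\eps^2))$ samples. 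Substituting $K\gtrsim\|\bm{\ell}\|_1/L$, the requirement becomes $\|\bm{\ell}\|_1/\sqrt{L}\ge C d/\eps^2$, which is the stated condition. Repeating $7$ times and rejecting if any repetition rejects amplifies the $>3/10$ rejection probability as before. The same group $P_j$ serves all repetitions by splitting each user's $\ell_k$ bits into $7$ chunks.

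\textbf{Main obstacle.} The delicate step is the grouping and coordinate-assignment bookkeeping. We must ensure every group covers all $L/7$ coordinates in every repetition, despite floors when some $\ell_k<7$; users with tiny budgets may need to be assigned to only some repetitions. We also need the count $K=\Omega(\|\bm{\ell}\|_1/L)$ to survive discarding the last group, which holds once $\|\bm{\ell}\|_1\ge CL$, itself implied by the hypothesis since $d/\eps^2\ge \sqrt{L}$. My first attempt would be to handle tiny budgets by letting each user contribute coordinates to a single repetition only. Each repetition then gets its own greedy partition over a disjoint subset of users carrying at least a $1/7$ fraction of the total budget, and the correctness proof is otherwise unchanged.
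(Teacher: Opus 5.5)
Your proposal is correct and follows essentially the paper's route. You compress with a BRHT to dimension $\Theta(\|\bm{\ell}\|_\infty)$, let users jointly simulate $\Theta(\|\bm{\ell}\|_1/\|\bm{\ell}\|_\infty)$ i.i.d.\ binary product samples by reporting signs of rotated coordinates, run the centralized binary product tester, and amplify over $7$ repetitions. The differences are only bookkeeping: the paper packs bits cyclically (user $k$ reports the next $\ell_k$ coordinates modulo the compressed dimension, so a user may straddle two simulated samples, which is harmless because distinct rotated coordinates of one sample are independent), which sidesteps your floor and grouping issues; your disjoint-user-subsets handling of the $7$ repetitions makes explicit an amplification step the paper leaves implicit.
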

\begin{proof}
    In view of the previous protocols, the idea is relatively straightforward: the users use $s/7$ bits of shared randomness to sample a $(d,d_s)$-BRHT $R$, for
\[
d_s := d/2^{\lfloor s/28\rfloor}\,.
\]
Let $\tilded:= d_s\vee\max_{1\leq k\leq n}\ell_k = \Theta(\max_{1\leq k\leq n}\ell_k)$ (the last equality by our assumption on $s$) and $\tilde{\mean} = (R\mean)_{[1:\tilded]}$.
As in the proof of~\cref{sdgmt}, each user applies $R$ to its sample to get $\tilde{\bm{X}}^{(k)} = (R\bm{X}^{(k)})_{[1: \tilded]}$, so that $\tilde{\bm{X}}^{(k)}\sim\mathcal{G}\left(\tilde{\mean},\mathbb{I}_{\tilded}\right)$. Using~\cref{lemma:compression}, with probability at least $8/25$,
    \[
        \left\|\tilde{\mean}\right\|_2^2 \ge \frac{1}{100} \cdot \frac {\tilded}{d}\cdot \|\mean\|^2_2
    \]
so that in one case $\tilde{\mean} = \bm{0}$ and in the other $\|\tilde{\mean}\|_2^2 =\Omega(\frac{\tilded\eps^2}{d})$ with constant probability. This is where we depart from \cref{sdgmt}. The users and referee will now use the same idea of distributed simulation, but in a slightly different fashion: recall that all parameters are known to all parties, and that by choice of $\tilded$ we have $\ell_k \leq \tilded$ for all $k$. Then, user 1 sends the first $\ell_1$ bits of their resulting $\tilded$-dimensional sample; user $2$ sends the following $\ell_2$ bits, wrapping around (i.e., the bits $\ell_1+1, \ell_1+2, \dots, \ell_1+\ell_2 \bmod \tilded$), and so on. Overall, the $\ell_1+\dots+\ell_n = \|\bm{\ell}\|_1$ bits sent allow the referee to reconstruct exactly $n':=\lfloor\|\bm{\ell}\|_1/\tilded\rfloor$ binary samples $\bm{Y}_1,\dots, \bm{Y}_{n'}$ as in the proof of~\cref{private_coin}, which leads to a successful protocol whenever
\[
n' \geq C\cdot \frac{\sqrt{d_s}}{\eps^2(d_s/d)}
\]
i.e.,
\[
\frac{\|\bm{\ell}\|_1}{\|\bm{\ell}\|_\infty} \geq C\cdot \frac{d}{\eps^2\sqrt{\|\bm{\ell}\|_\infty}}
\]
from our settings of $\tilded$ and $d_s$ (after, as before, including the probability amplification over $7$ independent repetitions).
\end{proof}

\section{Putting it all together: \cref{theorem:mix-and-match}}
\noindent With protocols for the three settings, we conclude by providing a protocol for the general setting where shared randomness, sample counts and communication budgets are all free parameters. The protocol is relatively straightforward as it combines the protocols for the three settings.
\begin{theorem}[\cref{theorem:mix-and-match}, restated]\label{theorem:mix-and-match_restated}
    For $s\geq 0$, $\bm{m} = (m_1,\dots, m_n) \in \mathbb{Z}_+^n$, and $\bm{\ell} = (\ell_1,\dots, \ell_n) \in \mathbb{Z}_+^n$, there exists a protocol for distributed Gaussian mean testing with $n$ users sharing $s$ bits of public randomness, where user $k$ has $m_k$ samples and can send $\ell_k$ bits to the referee, provided that there 
    exists a partition $\mathcal{P} = \{P_1,\dots,P_K\}$ of the $n$ users such that
    \[
    \frac{ \sum_{j_1\neq j_2 \in [K]}\sqrt{\min_{i_1\in P_{j_1}}(m_{i_1}) \cdot \min_{i_2\in P_{j_2}}(m_{i_2})}}{K} \ge C \cdot \frac{d}{\eps^2\sqrt{\max\left(\frac{d}{2^{\Theta(s)}} ,\|\bm{\ell}\|_\infty \right)}}
    \]
    and, for every $P \in \mathcal{P}$, $\sum_{i\in P}\ell_i \ge r\cdot\max\left(\frac{d}{2^{\Theta(s)}} ,\|\bm{\ell}\|_\infty \right)$ and $\min_{i \in P}(m_i) \in [r,r\eps^{-2}]$;
    where $C,r>0$ are absolute constants.
\end{theorem}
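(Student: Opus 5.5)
We combine the three previous protocols, treating each part $P_j$ of the partition as a single ``virtual user''. Set $\tilded := \max\left(d/2^{\lfloor s/28\rfloor},\|\bm{\ell}\|_\infty\right)$, which matches the quantity $\max(d/2^{\Theta(s)},\|\bm{\ell}\|_\infty)$ in the statement. As in the proof of \cref{sdgmt}, the users spend $s/7$ shared bits per repetition to sample $7$ independent $(d,d/2^{\lfloor s/28\rfloor})$-BRHTs $R_1,\dots,R_7$; this is affordable by \cref{brht_randomness}. We then analyze a single repetition with BRHT $R$ and set $\tilde{\mean}:=(R\mean)_{[1:\tilded]}$. Since $\tilded$ is at least the block size, \cref{lemma:compression} (with $t=\tilded\cdot 2^{\lfloor s/28\rfloor}/d$) gives $\|\tilde{\mean}\|_2^2\ge \frac{\tilded}{100d}\eps^2$ with probability at least $8/25$ in the far case. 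In the null case $\tilde{\mean}=\bm 0$.

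Inside a part $P_j$, let $M_j:=\min_{i\in P_j} m_i$. Each user $i\in P_j$ aggregates $M_j/7$ of its samples per repetition, exactly as in \cref{algorithm:mdgmt_alg}. Multiplying by $\sqrt{7/M_j}$ yields $\bm{X}^{(i)}\sim\mathcal{G}(\sqrt{M_j/7}\,\mean,\mathbb{I}_d)$. The point of using the minimum is that \emph{all} users in $P_j$ then hold identically distributed aggregated samples. Each user applies $R$, truncates to $\tilded$ coordinates, and takes signs, which gives a vector in $\{0,1\}^{\tilded}$ whose coordinate biases are the same across the part. Following the wrap-around simulation from the proof of \cref{combudgetsdgmt}, users in $P_j$ send consecutive disjoint blocks of coordinates of length $\ell_i/7$. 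The condition $\sum_{i\in P_j}\ell_i\ge r\tilded$, with $r\ge 7$, lets the referee assemble at least one complete vector $\bm{y}^{(j)}\in\{0,1\}^{\tilded}$. Its coordinates come from different users, but they are independent with the correct marginals, because the distribution is a product and identical within the part. So $\bm y^{(1)},\dots,\bm y^{(K)}$ are independent binary product vectors with mean biases $\tilde p^{(j)}_i$, and $|\tilde p_i^{(j)}|\ge \frac18\sqrt{M_j/7}\,|\tilde\mu_i|$ by the Erfc bound. The hypothesis $M_j\le r\eps^{-2}$ ensures $\sqrt{M_j}\,|\tilde\mu_i|$ stays in the linear regime of Erfc. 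All $\tilde p_i^{(j)}$ have the same sign as $\tilde\mu_i$.

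The referee then runs the heterogeneous-sample test from the proof of \cref{msampledgmt} on these $K$ vectors. Set $N:=\sum_{j_1\ne j_2}\sqrt{M_{j_1}M_{j_2}/49}$ and use the statistic $T=\frac{1}{K(K-1)}\sum_{i\le\tilded}\sum_{j_1\ne j_2}(y^{(j_1)}_i-\tfrac12)(y^{(j_2)}_i-\tfrac12)$, with threshold $\tau=\Theta\!\left(\frac{N\tilded\eps^2}{K(K-1)d}\right)$. The same computation as in \cref{msampledgmt:restated}, with $\ell/7$ replaced by $\tilded$ and $n$ by $K$, gives the following bounds:
\begin{itemize}
\item Null case: $\operatorname{Var}(T)\le \tilded/(8K(K-1))$.
\item Far case, on the good BRHT event: $\mathbb{E}[T]\ge \frac{N}{64K(K-1)}\|\tilde{\mean}\|_2^2$.
\end{itemize}
Chebyshev's inequality then succeeds in both cases once $N/K\ge C\, d/(\eps^2\sqrt{\tilded})$, which is exactly the displayed hypothesis. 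The linear variance term $(K-2)\mathbb{E}[T]/(K(K-1))$ is handled as before: its contribution $O(Kd/(N\tilded\eps^2))$ is dominated under the same condition because $N/K\gtrsim d/(\eps^2\sqrt{\tilded})\ge d/(\eps^2\tilded)$. Finally, seven repetitions, accepting only if all accept, amplify the far-case success from $>3/10$ to an error below $1/10$, while the null-case error stays at most $7/100$.

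The main obstacle I expect is justifying the in-part simulation: the coordinate blocks must be reassembled into an honest sample with the right joint law. This needs independence across coordinates (true, since the covariance is identity after the orthogonal $R$) and identical marginals within a part, which is why we down-aggregate to $M_j$. A secondary nuisance is bookkeeping the wrap-around when $\ell_i$ does not divide $\tilded$. We would discard any incomplete vector. We would also make sure that no user contributes two copies of the same coordinate to a single reassembled vector; this holds because $\ell_i/7\le\tilded$, so a user's block never wraps fully around.
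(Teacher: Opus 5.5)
Your proposal is correct and follows the paper's proof. Each part $P_j$ becomes a virtual user holding $\min_{i\in P_j} m_i$ samples; the wrap-around simulation of \cref{combudgetsdgmt:restated} assembles one $\tilded$-dimensional binary sample per part after a $(d,d/2^{\lfloor s/28\rfloor})$-BRHT; and the referee runs the heterogeneous-sample test of \cref{algorithm:mdgmt_alg} on the $K$ resulting vectors, which you work out in more detail than the paper does. One minor caveat, shared with the paper's own analysis of \cref{msampledgmt:restated}: $M_j\le r\eps^{-2}$ alone does not keep $\sqrt{M_j}\,|\tilde\mu_i|$ in the linear regime when $\|\mean\|_2\gg\eps$. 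Since $|\tilde p_i^{(j)}|$ is nondecreasing in $|\tilde\mu_i|$, you can repair this by lower bounding via the rescaled vector $\lambda\tilde{\mean}$, with $\lambda\le 1$ chosen so that $\|\lambda\tilde{\mean}\|_2^2=\frac{\tilded\eps^2}{100d}$; this recovers the same bound on $\mathbb{E}[T]$.
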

\begin{proof} 
 Let $\mathcal{P} = \{P_1,\dots,P_K\}$ be a partition of users (for all $j_1,j_2$ $P_{j_1} \cap P_{j_2} = \emptyset$, $\bigcup_{j}P_j = [n]$) such that for all $j$, $\sum_{i\in P_j}\ell_i \ge 7\cdot\max\left(\frac{d}{2^{\lfloor s/28\rfloor}} ,\|\bm{\ell}\|_\infty \right)$. Define 
 \[
    \tilded := \max\left(\frac{d}{2^{\lfloor s/28\rfloor}},\|\bm{\ell}\|_\infty\right)
 \]
 Using the distributed simulation idea in \cref{combudgetsdgmt:restated}, users in each $P_j \in \mathcal{P}$ can simulate \emph{one} $\tilded$-dimensional binary sample. But, users in $P_j$ hold $(m_i)_{i\in P_j}$ samples. Thus, the setting can be reduced to $K = |\mathcal{P}|$ users holding $(\min_{i \in P_1}(m_i),\dots,\min_{i \in P_K}(m_i))$ samples under $\tilded$-bit communication constraints, for which we invoke the protocol of~\cref{algorithm:mdgmt_alg}.
\end{proof}

\bibliographystyle{alpha}
\bibliography{ref}

\appendix

\section{Deferred Proofs}

\begin{lemma} \label{lemma:testing_lemma}
Let $Z_1,\dots,Z_n$ be independent random variables such that $Z_k \in \{-1/2,1/2\}$ and $\mathbb E[Z_k]=p_k$. Assume furthermore $p_{k_1}p_{k_2} \ge 0$ for all $k_1\neq k_2$ and define
\[
T := \sum_{k_1\neq k_2} Z_{k_1}Z_{k_2}.
\]
Then the following holds:
\begin{enumerate}
    \item $\mathbb E[T] = \sum_{k_1\neq k_2}p_{k_1}p_{k_2}$
    \item $\operatorname{Var}(T) \le \frac{n(n-1)}8 + (n-2)\mathbb{E}[T]$
\end{enumerate}
\end{lemma}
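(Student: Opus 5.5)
The plan is to prove both items by writing each $Z_k$ as its mean plus a centered fluctuation, and then computing the variance exactly as a sum of two orthogonal pieces. Set $W_k := Z_k - p_k$. The $W_k$ are independent, centered, and have variance
\[
\sigma_k^2 := \mathbb{E}[W_k^2] = \tfrac14 - p_k^2,
\]
since $Z_k^2 = 1/4$ almost surely. Item~1 then follows directly from independence: $\mathbb{E}[Z_{k_1}Z_{k_2}] = p_{k_1}p_{k_2}$ for every pair $k_1\neq k_2$.

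For item~2, let $P := \sum_k p_k$. Expanding $Z_aZ_b = p_ap_b + p_aW_b + p_bW_a + W_aW_b$ and summing over ordered pairs $a\neq b$ gives
\[
T - \mathbb{E}[T] = \sum_{a\neq b} W_aW_b + 2\sum_{b} (P-p_b)\,W_b .
\]
The two pieces are uncorrelated. Indeed, $\mathbb{E}[W_aW_bW_c]=0$ whenever $a\neq b$, because some index appears exactly once and that $W$ has mean zero. A second application of independence then yields the exact identity
\[
\operatorname{Var}(T) = 2\sum_{a\neq b}\sigma_a^2\sigma_b^2 + 4\sum_b \sigma_b^2 (P-p_b)^2 .
\]

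The crude bounds $\sigma_k^2\le 1/4$ are not quite enough. They give $n(n-1)/8 + \sum_b (P-p_b)^2$. Expanding the square,
\[
\sum_b (P-p_b)^2 = (n-1)\sum_a p_a^2 + (n-2)\sum_{a\neq c}p_ap_c ,
\]
so there is an extra diagonal term $(n-1)\sum_a p_a^2$ beyond the target $(n-2)\mathbb{E}[T]$. I expect this to be the main obstacle, and I will absorb the extra term by keeping the exact variances $4\sigma_k^2 = 1-4p_k^2$.

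For the first piece,
\[
16\sigma_a^2\sigma_b^2 = (1-4p_a^2)(1-4p_b^2) = 1 - 4p_a^2 - 4p_b^2 + 16p_a^2p_b^2 .
\]
Summing over ordered pairs, the first piece becomes
\[
\frac{n(n-1)}{8} - (n-1)\sum_a p_a^2 + 2\sum_{a\neq b}p_a^2p_b^2 .
\]
The negative term exactly cancels the extra diagonal contribution from the second piece.

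For the second piece, the factor $-4p_b^2(P-p_b)^2$ must dominate the leftover $2\sum_{a\neq b}p_a^2p_b^2$. Here I will use the hypothesis $p_{k_1}p_{k_2}\ge 0$, which says all $p_k$ have the same sign. Hence $(P-p_b)^2 \ge \sum_{a\neq b}p_a^2$, and therefore
\[
-4\sum_b p_b^2(P-p_b)^2 \le -4\sum_{a\neq b}p_a^2p_b^2 .
\]
This more than cancels the $+2\sum_{a\neq b}p_a^2p_b^2$ term.

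Collecting terms gives $\operatorname{Var}(T) \le \frac{n(n-1)}{8} + (n-2)\sum_{a\neq c}p_ap_c$, and the last sum equals $\mathbb{E}[T]$ by item~1. This is the claimed bound.
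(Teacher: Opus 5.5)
Your proof is correct, and it reaches the same exact variance identity as the paper through a different decomposition.

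The paper uses $Z_k^2=1/4$ to write $T=S^2-n/4$ with $S=\sum_k Z_k$. It then expands $\mathbb{E}[S^4]$ over index patterns, which needs $\mathbb{E}[Z_k^3]=p_k/4$, and subtracts $\mathbb{E}[S^2]^2$. This gives $\operatorname{Var}(T)=\frac{n(n-1)}{8}+(n-2)\mathbb{E}[T]-4\sum_{a,b,c\ \mathrm{distinct}}p_a^2p_bp_c-2\sum_{a\neq b}p_a^2p_b^2$. The two negative sums are then simply dropped, the first because $p_bp_c\ge 0$.

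You instead center the variables and split $T-\mathbb{E}[T]$ into an orthogonal degree-two chaos plus a linear term. Each piece's variance then has only diagonal contributions. Expanding $\sigma_k^2=1/4-p_k^2$ and $(P-p_b)^2$ recovers exactly the paper's formula. Your leftover $2\sum_{a\neq b}p_a^2p_b^2-4\sum_b p_b^2(P-p_b)^2$ equals the paper's two negative sums. Likewise, your inequality $(P-p_b)^2\ge\sum_{a\neq b}p_a^2$ is the paper's use of $p_ap_c\ge 0$ in another form.

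Your route has three advantages. It avoids the fourth-moment combinatorics and the large cancellation between $\mathbb{E}[S^4]$ and $\mathbb{E}[S^2]^2$. It shows plainly why the crude bound $\sigma_k^2\le 1/4$ loses an extra $(n-1)\sum_a p_a^2$. And your identity $\operatorname{Var}(T)=2\sum_{a\neq b}\sigma_a^2\sigma_b^2+4\sum_b\sigma_b^2(P-p_b)^2$ holds for any independent square-integrable $Z_k$, whereas $T=S^2-n/4$ relies on the two-point structure.

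The paper's route has one advantage: its final expression shows each discarded term with an evidently nonpositive sign, so no auxiliary inequality is needed at the end.
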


\begin{proof}
Define
\[
S := \sum_{k\in[n]} Z_k.
\]
Since $Z_k^2 = 1/4$, we have
\[
T = \sum_{k_1\neq k_2} Z_{k_1}Z_{k_2} = \left(\sum_{k\in[n]} Z_k\right)^2 - \sum_{k\in[n]} Z_k^2 = S^2-\frac n4.
\]

\paragraph{1. $\mathbb{E}[T]$.}
Using independence,
\[
\mathbb E[S] = \sum_{k\in[n]} p_k
\]
and
\[
\operatorname{Var}(S) = \sum_{k\in[n]}\left(\frac14-p_k^2\right).
\]
Hence,
\[
\begin{aligned}
\mathbb E[S^2] &=\operatorname{Var}(S)+\mathbb E[S]^2 \\
&= \sum_{k\in[n]} \left(\frac14-p_k^2\right) + \left(\sum_{k\in[n]} p_k\right)^2 \\
&= \frac n4 + \sum_{k_1\neq k_2} p_{k_1}p_{k_2}.
\end{aligned}
\]
Therefore,
\[
\mathbb E[T] = \mathbb E[S^2]-\frac n4 = \sum_{k_1\neq k_2} p_{k_1}p_{k_2}.
\]

\paragraph{2. $\operatorname{Var}[T]$.}
Since $T=S^2-n/4$,
\[
\operatorname{Var}(T) = \operatorname{Var}(S^2) = \mathbb E[S^4]-\mathbb E[S^2]^2.
\]
Expanding $S^4$ and using independence together with
\[
\mathbb E[Z_k]=p_k, \qquad Z_k^2=\frac14, \qquad \mathbb E[Z_k^3]=\frac{p_k}{4}, \qquad Z_k^4=\frac1{16},
\]
we obtain
\[
\begin{aligned}
\mathbb E[S^4] &=\frac n{16}+ \frac{3n(n-1)}{16} \\
&\quad + \sum_{k_1\neq k_2} p_{k_1}p_{k_2} + \frac{3(n-2)}2 \sum_{k_1\neq k_2} p_{k_1}p_{k_2} \\
&\quad + \sum_{k_1\neq k_2\neq k_3\neq k_4} p_{k_1}p_{k_2}p_{k_3}p_{k_4}.
\end{aligned}
\]
On the other hand,
\[
\mathbb E[S^2]^2 = \frac{n^2}{16} + \frac n2 \sum_{k_1\neq k_2} p_{k_1}p_{k_2} + \left(\sum_{k_1\neq k_2} p_{k_1}p_{k_2}\right)^2.
\]
Subtracting,
\[
\operatorname{Var}(T)=\mathbb E[S^4]-\mathbb E[S^2]^2
=
\frac{n(n-1)}8 + (n-2)\sum_{k_1\neq k_2}p_{k_1}p_{k_2} - 4\sum_{k_1\neq k_2 \neq k_3}p^2_{k_1}p_{k_2}p_{k_3} -2\sum_{k_1\neq k_2}p_{k_1}^2p_{k_2}^2.
\]
Since $p_{k_1}p_{k_2} \ge 0$ for all $k_1\neq k_2$ and
\[
\mathbb E[T] = \sum_{k_1\neq k_2} p_{k_1}p_{k_2},
\]
the result follows.
\end{proof}

\begin{theorem} \label{proof_uniformity_testing}~\cref{uniformity_testing} achieves sample complexity $\Theta(\sqrt d/\eps^2)$ for testing whether $\bm p = \bm u_d$ versus $\|\bm p-\bm u_d\|_2 \ge \eps$ for a $d$-dimensional binary product distribution with mean vector $\bm p$.
\end{theorem}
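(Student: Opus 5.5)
The natural route is to specialize \cref{lemma:testing_lemma} to the i.i.d.\ case, coordinate by coordinate, and then sum over coordinates and apply Chebyshev. Given $n$ samples, write $Z^{(k)}_i := X^{(k)}_i - 1/2 \in \{-1/2,1/2\}$. For each fixed $i$ these are i.i.d.\ with mean $\tilde p_i := p_i - 1/2$, so the sign condition $\tilde p_i\tilde p_i \ge 0$ holds trivially. \Cref{lemma:testing_lemma} then gives
\[
\mathbb{E}[T_i] = n(n-1)\tilde p_i^2, \qquad \operatorname{Var}(T_i) \le \frac{n(n-1)}{8} + (n-2)\,n(n-1)\,\tilde p_i^2 .
\]
Because the distribution is a product distribution, the $T_i$ are independent across coordinates. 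Summing, $T = \frac{1}{n(n-1)}\sum_i T_i$ satisfies
\[
\mathbb{E}[T] = \|\bm p - \bm u_d\|_2^2, \qquad \operatorname{Var}(T) \le \frac{d}{8n(n-1)} + \frac{(n-2)\,\mathbb{E}[T]}{n(n-1)} .
\]

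\textbf{Completeness.} If $\bm p = \bm u_d$, then $\mathbb{E}[T]=0$. Chebyshev with $\tau = \eps^2/2$ gives
\[
\Pr[T>\tau] \le \frac{d/(8n(n-1))}{\eps^4/4} = O\!\left(\frac{d}{n^2\eps^4}\right),
\]
which is below $1/100$ once $n \ge C\sqrt d/\eps^2$.

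\textbf{Soundness.} If $\|\bm p-\bm u_d\|_2 \ge \eps$, write $E = \mathbb{E}[T] \ge \eps^2$, so $\tau \le E/2$. Chebyshev gives
\[
\Pr[T \le \tau] \le \frac{4\operatorname{Var}(T)}{E^2} \le \frac{d}{2n(n-1)E^2} + \frac{4}{nE}.
\]
Using $E \ge \eps^2$, the first term is $O(d/(n^2\eps^4))$ and the second is $O(1/(n\eps^2))$. Both are below $1/200$ when $n \ge C\sqrt d/\eps^2$, since $\sqrt d \ge 1$. This step is the only mildly delicate point: one has to bound the variance relative to $E^2$ and not $\eps^4$, but this goes through directly because the variance bound is linear in $E$.

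\textbf{Lower bound.} The $\Theta$ in the statement also asks for a matching $\Omega(\sqrt d/\eps^2)$. I would obtain it by the standard reduction: take mixtures of product distributions with $p_i = 1/2 \pm \eps/\sqrt d$, with random signs, and bound the chi-square divergence between the $n$-sample mixture and the uniform distribution. This is the folklore argument, which I would cite rather than reprove, since the paper only uses the upper bound.
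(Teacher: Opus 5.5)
Your proposal is correct and follows the paper's proof essentially step for step. You apply \cref{lemma:testing_lemma} coordinate-wise, sum over the independent coordinates to get $\mathbb{E}[T]=\|\bm p-\bm u_d\|_2^2$ with the same variance bound, and then use Chebyshev in both cases with $n \ge C\sqrt d/\eps^2$; your soundness step via $\tau \le \mathbb{E}[T]/2$ is stated more carefully than the paper's equality. Your only addition is the matching $\Omega(\sqrt d/\eps^2)$ lower-bound sketch, which the paper does not prove either, treating optimality as known.
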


\begin{proof}
Let $\tilde{\bm p}:=\bm p-\bm u_d.$ For each coordinate $i\in[d]$, define
\[
T_i:=\sum_{k_1\neq k_2\in[n]}\left(X_i^{(k_1)}-\frac12\right)\left(X_i^{(k_2)}-\frac12\right),
\]
and
\[
T := \frac{1}{n(n-1)}\sum_{i=1}^d T_i.
\]
Fix $i\in[d]$, and define
\[
Z_k:=X_i^{(k)}-\frac12.
\]
Then $Z_1,\dots,Z_n$ are independent random variables taking values in $\{-1/2,1/2\}$, with $\mathbb E[Z_k] = \tilde p_i.$
Applying~\cref{lemma:testing_lemma},
\[
\mathbb E[T_i] = n(n-1)\tilde p_i^2
\]
and
\[
\operatorname{Var}(T_i)\le \frac{n(n-1)}8 + n(n-1)(n-2)\tilde p_i^2.
\]
Summing over coordinates,
\[
\mathbb E[T] = \|\tilde{\bm p}\|_2^2 = \|\bm p-\bm u_d\|_2^2,
\]
and
\[
\begin{aligned}
\operatorname{Var}(T) &= \frac{1}{n^2(n-1)^2} \sum_{i=1}^d \operatorname{Var}(T_i) \\
&\le \frac{d}{8n(n-1)} + \frac{n-2}{n(n-1)}\|\tilde{\bm p}\|_2^2.
\end{aligned}
\]
We now analyze the two cases:
\paragraph{Case 1: $\bm p=\bm u_d$.}
In this case, $\mathbb E[T]=0$ and therefore
\[
\operatorname{Var}(T)
\le
\frac{d}{8n(n-1)}.
\]
By Chebyshev's inequality,
\[
\Pr\!\left(T>\frac{\eps^2}{2}\right)
\le
\frac{4\operatorname{Var}(T)}{\eps^4}
\le
O\!\left(\frac{d}{n^2\eps^4}\right).
\]
\paragraph{Case 2: $\|\bm p-\bm u_d\|_2\ge\eps$.}
In this case,
\[
\mathbb E[T]
=
\|\bm p-\bm u_d\|_2^2
\ge
\eps^2.
\]
Again by Chebyshev's inequality,
\[
\begin{aligned}
\Pr\!\left(T\le\frac{\eps^2}{2}\right)
&= \Pr\!\left(\mathbb E[T]-T \ge \frac{\mathbb E[T]}{2} \right) \\
&\le \frac{4\operatorname{Var}(T)}{\mathbb E[T]^2} \\
&\le O\!\left(\frac{d}{n^2\eps^4} + \frac{1}{n\eps^2} \right).
\end{aligned}
\]
Thus, choosing
\[
n > C\cdot(\sqrt d/\eps^2)
\]
for some absolute constant $C > 0$, makes both error probabilities smaller than $1/100$ .
\end{proof}
\end{document}